\def\eps{\varepsilon}
\def\d{{\rm d}}
\def\ddt{\frac{\d}{\d t}}
\def\R {\mathbb{R}}
\def\N {\mathbb{N}}
\def\BB {\mathfrak{B}}
\def\H {{\mathcal H}}
\def\C {{\mathcal C}}
\def\D {{\mathcal D}}
\def\E {{\mathcal E}}
\def\L {{\mathcal L}}
\def\A {{\mathcal A}}
\def\Q {{\mathcal Q}}
\def \l {\langle}
\def \r {\rangle}
\def \pt {\partial_t}
\def \ptt {\partial_{tt}}
\newtheorem{proposition}{Proposition}[section]
\newtheorem{theorem}[proposition]{Theorem}
\newtheorem{lemma}[proposition]{Lemma}
\theoremstyle{definition}
\newtheorem{remark}[proposition]{Remark}
\numberwithin{equation}{section}
\newtheorem*{notation}{Notation}
\def \au {\rm}
\def \ti {\it}
\def \jou {\rm}
\def \bk {\it}
\def \no#1#2#3 {{\bf #1} (#3), #2.}
\def \eds#1#2#3 {#1, #2, #3.}
\title[Longterm Dynamics of the Extensible Suspension Bridge]
{Longterm Damped Dynamics \\ of  the Extensible Suspension Bridge}
\author[I. Bochicchio, C. Giorgi, E. Vuk]
{Ivana Bochicchio$^1$, Claudio Giorgi$^2$, Elena Vuk$^2$}
\address{
$^1$ Dipartimento di Matematica e Informatica, Universit\`a degli
studi di Salerno, Italy
\newline\indent
and INFN, Sez. di Napoli, Compl. Univ. di
Monte S. Angelo, Napoli, Italy} \email{ibochicchio@unisa.it}
\address{$^2$Dipartimento di Matematica, Universit\`a degli studi di Brescia, Italy}
\email{giorgi@ing.unibs.it} \email{vuk@ing.unibs.it}
\subjclass[2000]{35B40, 35B41, 37B25, 74G60, 74H40, 74K10}
\keywords{Extensible elastic
beam, suspension bridge, absorbing set, global attractor}
\begin{document}

\begin{abstract}
This work is focused on the doubly nonlinear equation
\begin{equation}\nonumber
\ptt u+\partial_{xxxx}u + \big(p-\|\partial_x
u\|_{L^2(0,1)}^2\big)\partial_{xx}u+ \pt u +k^2u^+= f,
\end{equation}
whose solutions represent the bending motion of an extensible,
elastic bridge suspended by continuously distributed cables which
are flexible and elastic with stiffness $k^2$.\ When the ends are
pinned, longterm dynamics is scrutinized for arbitrary values of axial load
$p$ and stiffness $k^2$.\  For a general external source $f$ we prove the
existence of bounded absorbing sets.  When $f$ is time-independent,  the related semigroup of solutions is
shown to possess the global attractor of optimal regularity and its
characterization is given in terms of the steady
states of the problem.
\end{abstract}

\maketitle

\section{Introduction}

\subsection{The model equation}
In this paper, we scrutinize the longtime behavior of a nonlinear
evolution problem describing the damped oscillations of an
extensible elastic bridge of unitary natural length suspended by means of  flexible and elastic cables. The model equation ruling its dynamics can be derived from the standard modeling procedure, which relies on the basic assumptions of continuous distribution of the stays' stiffness along the girder and of the dominant truss behavior of the bridge (see, for instance, \cite{K-NYB}).

In the pioneer
papers by McKenna and coworkers (see
\cite{LMK,MKW,MKW1}),
the dynamics of  a suspension bridge is given by the well known damped equation
\begin{equation}
\label{BRIDGE} \ptt u+\partial_{xxxx}u + \pt u +k^2u^+= f,
\end{equation}
where $u=u(x,t):[0,1]\times\R\to\R$, accounts for the  downward deflection
of the bridge in the vertical plane, and $u^+$ stands for its positive part, namely,
\begin{equation}
u^+=
\begin{cases}
 u \qquad \text{if } u\geq0,\\
 0 \qquad \text{if } u<0.
\end{cases}
\end{equation}
Our model is derived here by taking into account the midplane
stretching of the road bed due to its elongation. As a consequence a
geometric nonlinearity appears into the bending equation.
This is achieved by
combining the pioneering ideas of Woinowsky-Krieger on the
extensible elastic beam  \cite{W} with equation
\eqref{BRIDGE}. Setting for simplicity all the
positive structural constants of the bridge equal to 1, we have
\begin{equation}
\label{BEAM} \ptt u+\partial_{xxxx}u + \big(p-\|\partial_x
u\|_{L^2(0,1)}^2\big)\partial_{xx}u+  \pt u +k^2u^+= f,
\end{equation}
where $f=f(x,t)$ is the (given) vertical dead load distribution.
The term $-k^2u^+$ models a restoring force
due to the cables, which is different from zero only when they are being stretched, and $\pt u$
accounts for an external resistant force linearly depending on the
velocity. The real constant $p$ represents the axial force acting
at the ends of the road bed of the bridge in the reference
configuration. Namely, $p$ is negative when the bridge is
stretched, positive when compressed.

As usual, $u$ and $\pt u$ are required to satisfy initial
conditions as follows
\begin{equation}
\label{BEAMIC}
\begin{cases}
u(x,0)=u_0(x), &  x\in[0,1],\\
\pt u(x,0)={u}_1(x), & x\in[0,1].
\end{cases}
\end{equation}
Concerning the boundary conditions, we consider here the case when
both ends of the bridge are pinned. Namely, for every $t\in\R$, we
assume
\begin{equation}
\label{BEAMBC}
u(0,t)=u(1,t)=\partial_{xx}u(0,t)=\partial_{xx}u(1,t)=0.
\end{equation}
This is the simpler choice. However, other types of boundary conditions with fixed ends are
consistent with the extensibility assumption as well; for instance,
when both ends are clamped, or when one end is clamped and the
other one is pinned. We address the reader to \cite{GPV} for a
more detailed discussion. Assuming \eqref{BEAMBC}, the domain of the differential operator $\partial_{xxxx}$ acting on $L^2(0,1)$ is
$$
\D(\partial_{xxxx})=\{w\in H^4(0,1) : w(0)=w(1)=\partial_{xx}w(0)=\partial_{xx}w(1)=0\}.
$$
This operator is strictly positive selfadjoint with compact
inverse, and its discrete spectrum is given by
$\lambda_n=n^4\pi^4$, $n\in\N$. Thus, ${\lambda_1=\pi^4}$ is the
smallest eigenvalue. Besides, the peculiar relation

$$(\partial_{xxxx})^{1/2}=-\partial_{xx}
$$
holds true, with Dirichlet boundary conditions and
$$\D(-\partial_{xx})=H^2(0,1)\cap H^1_0(0,1).$$
Hence, if pinned ends are considered, the initial-boundary value
problem \eqref{BEAM}--\eqref{BEAMBC} can be described by means of
a single operator $A=\partial_{xxxx}$, which enters the equation
at the powers 1 and 1/2. Namely,
$$
\partial_{tt} u+Au+ \pt u- \big(p-\|u\|^2_1\big)A^{1/2}u+k^2u^+=
f,$$
where $\| \cdot \|_1$ is the norm of $H^1_0(0,1)$.
This fact is particularly relevant in the
analysis of the {\it critical buckling load} $p_{\rm c}$, that is,
the magnitude of the compressive axial force $p>0$ at which
buckled stationary states appear.

As we shall show throughout the paper, this
model leads to exact results which are rather simple to  prove and, however,
are capable of capturing the main behavioral dynamic characteristics of the
bridge.

\subsection{Earlier contributions}
In recent years, an increasing attention was payed to the analysis of buckling, vibrations and post-buckling dynamics of nonlinear beam models, especially in connection with industrial applications \cite{LG,NP} and suspension bridges \cite{AGR1,AH}. As far as we know, most of the papers in the literature deal with approximations and numerical simulations, and only few works are able to derive exact solutions, at least under stationary conditions (see, for instance, \cite{BGV,BoV,CZGP,GV}). In the sequel, we give a brief sketch of earlier contributions on this subject.

In the fifties, Woinowsky-Krieger \cite{W}
proposed to modify the theory of the dynamic Euler-Bernoulli beam,
assuming a nonlinear dependence of the axial strain on the
deformation gradient. The resulting motion equation,
\begin{equation}
\label{BEAMW} \ptt u+\partial_{xxxx}u + \big(p-\|\partial_x
u\|_{L^2(0,1)}^2\big)\partial_{xx}u = 0,
\end{equation}
has been considered for hinged ends in the papers \cite{B1,D},
with particular reference to well-posedness results and to the
analysis of the complex structure of equilibria. Adding an
external viscous damping term $ \pt u$  to the original
conservative model, it becomes
\begin{equation}
\label{DAMPED} \ptt u+\partial_{xxxx}u+ \pt u+ \big(p-\|\partial_x
u\|_{L^2(0,1)}^2\big)\partial_{xx}u= 0.
\end{equation}
Stability properties of the unbuckled (trivial) and the buckled
stationary states of \eqref{DAMPED} have been established in \cite{B,D1} and, more formally, in \cite{RM}. In particular, if
$p<p_{\rm c}$, the exponential decay of solutions to the trivial
equilibrium state has been shown. The global dynamics of solutions
for a general $p$ has been first tackled in \cite{HAL} and improved in \cite{EM}, where the existence of a global attractor for \eqref{DAMPED}
subject to hinged ends was proved relying on the construction of a suitable
Lyapunov functional. In \cite{CZ} previous results are extended to a more general form of the nonlinear term by virtue of a suitable  decomposition of the semigroup first introduced in \cite{GPV}.

A different class of problems arises in the study of vibrations of a suspension bridge. The dynamic response of suspension bridges is usually analyzed by linearizing the equations of motion. When the effects of extensibility of the girder are neglected and the coupling with the main cable motion is disregarded, we obtain the well-known Lazer-McKenna equation \eqref{BRIDGE}.
Free and forced vibrations in models of this type, both with constant and non constant load, have been scrutinized in \cite{AH} and \cite{CJ}. The existence of strong solutions and global attractors for \eqref{BRIDGE} has been  recently obtained in \cite{ZMS}.

In certain cases Lazer-McKenna's model becomes inadequate and the effects of extensibility of the girder have to be taken into account. This can be done by introducing into the model equation \eqref{BRIDGE} a geometric nonlinear term like that appearing in \eqref{BEAMW}. Such a term is of some importance in the modeling of cable-stayed bridges (see, for instance, \cite{K-NYB,Vir}), where the elastic suspending cables are not vertical and produce a well-defined axial compression on the road bed.

Several studies have been devoted to the nonlinear vibrational analysis of mechanical models close to \eqref{BEAM}. Abdel-Ghaffar and Rubin \cite{AGR,AGR1}  presented a general theory and analysis of the nonlinear free coupled vertical-torsional vibrations of suspension
bridges. They developed approximate solutions by using the method of multiple scales via a perturbation technique. If torsional vibrations are ignored, their model reduces to \eqref{BEAM}.  Exact solutions to this problem, at least under stationary conditions, have been recently exhibited in \cite{GV}.

\subsection{Outline of the paper}

In the next Section 2, we formulate an abstract version of the
problem. We observe that its solutions are generated by a solution
operator $S(t)$, which turns out to be a strongly continuous semigroup in the autonomous case. The existence of an absorbing set for the solution
operator $S(t)$ is proved in Section 3 by virtue of a Gronwall-type Lemma. Section 4 is focused on the autonomous case and contains our main result. Namely, we establish the {\it existence of the regular global attractor} for a general $p$. In particular, we prove this by appealing to the existence of a Lyapunov functional and without requiring any assumption on the strength of the dissipation term.  A characterization of the global attractor is given in terms of the steady states of the system \eqref{BEAM}--\eqref{BEAMBC}.
First, we proceed with some preliminary estimates and prove the exponential stability of the system provided that the axial force $p$ is smaller than $p_c$. Finally, the smoothing property of the semigroup generated by the abstract problem is stated via a suitable decomposition first devised in \cite{GPV}.

\section{The Dynamical System}

\noindent In the sequel we recast problem~\eqref{BEAM}-\eqref{BEAMBC} into an abstract setting in order to establish more general results.

 Let $(H,\l\cdot,\cdot\r,\|\cdot\|)$ be a real Hilbert
space, and let $A:\D(A)\Subset H\to H$ be a strictly positive
selfadjoint operator with compact inverse. For $r\in\R$, we introduce the scale of
Hilbert spaces generated by the powers of $A$
$$
H^r=\D(A^{r/4}),\qquad \l u,v\r_r=\l A^{r/4}u,A^{r/4}v\r,\qquad
\|u\|_r=\|A^{r/4}u\|.
$$
When $r=0$, the index $r$ is omitted. The symbol
$\l\cdot,\cdot\r$ will also be used to denote the duality product
between $H^r$ and its dual space $H^{-r}$. In particular, we have
the compact embeddings $H^{r+1}\Subset H^r$, along with the
generalized Poincar\'e inequalities
\begin{equation}\label{POINCARE}
\lambda_1\|u\|_r^4\leq \|u\|_{r+1}^4,\qquad\forall u\in H^{r+1},
\end{equation}
where $\lambda_1>0$ is the first eigenvalue of $A$. Finally, we define the product Hilbert spaces
$$\H^r=H^{r+2}\times H^r.$$
For $p\in\R$, we consider the following abstract Cauchy problem on $\H$
in the unknown variable $u=u(t)$,
\begin{equation}
\label{ASTRATTO}
\begin{cases}
\partial_{tt} u+Au+ \pt u- \big(p-\|u\|^2_1\big)A^{1/2}u+k^2u^+=
f(t),\quad t>0,
\\
u(0)\,=\,u_0, \quad \pt u(0)\,=\,u_1 \ .
\end{cases}
\end{equation}
Problem~\eqref{BEAM}-\eqref{BEAMBC} is just a
particular case of the abstract system \eqref{ASTRATTO}, obtained
by setting $H=L^2(0,1)$ and $A=\partial_{xxxx}$ with the boundary
condition \eqref{BEAMBC}.

The following well-posedness result holds.

\begin{proposition}
\label{EU} Assume that $f\in L^1_{\rm loc}(0,T; H).$ Then, for all
initial data $z=(u_0,u_1)\in\H$, problem~\eqref{ASTRATTO} admits a
unique solution
$$(u(t),\pt u(t))\in\C(0,T; \H)\,,$$
which continuously depends on the initial data.
\end{proposition}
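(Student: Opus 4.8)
The plan is to establish local well-posedness by a standard Galerkin (Faedo–Galerkin) scheme together with energy estimates, then bootstrap to global existence, and finally prove uniqueness and continuous dependence via an energy identity for the difference of two solutions. First I would fix the orthonormal basis $\{e_n\}$ of $H$ consisting of eigenvectors of $A$, project \eqref{ASTRATTO} onto the span $H_n=\mathrm{span}\{e_1,\dots,e_n\}$, and obtain a system of ODEs for the approximate solution $u_n(t)=\sum_{j=1}^n g_j^n(t)e_j$. Because the nonlinearities $\|u\|_1^2 A^{1/2}u$ and $k^2u^+$ are locally Lipschitz as maps on the finite-dimensional space (the positive-part map is globally Lipschitz on $H$, and the cubic term is smooth), the Cauchy–Lipschitz theorem gives a unique local solution $u_n$ on a maximal interval.

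Next I would derive the basic a priori estimate by multiplying the projected equation by $\pt u_n$ and integrating: this produces the natural energy
$$
\E_n(t)=\|\pt u_n\|^2+\|u_n\|_2^2+p\|u_n\|_1^2+\tfrac12\|u_n\|_1^4+k^2\|u_n^+\|^2,
$$
whose derivative equals $-2\|\pt u_n\|^2+2\l f,\pt u_n\r$. The only delicate point is that for $p<0$ the term $p\|u_n\|_1^2$ is negative, but it is controlled by $\tfrac12\|u_n\|_1^4$ plus a constant (Young's inequality), and by the Poincaré inequality \eqref{POINCARE} the remaining positive part dominates $\|u_n\|_2^2+\|\pt u_n\|^2$ up to an additive constant. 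Using $2\l f,\pt u_n\r\le \|f\|^2+\|\pt u_n\|^2$ and the assumption $f\in L^1_{\mathrm{loc}}(0,T;H)$ — more comfortably handled if one has $f\in L^2_{\mathrm{loc}}$, otherwise one absorbs $\|f\|\,\|\pt u_n\|$ and applies a Gronwall argument in the form adapted to $L^1$ forcing — I get a bound on $\E_n$ uniform in $n$ on any $[0,T]$, which in particular rules out blow-up of $u_n$ and yields global-in-$n$ bounds for $u_n$ in $L^\infty(0,T;H^2)$ and $\pt u_n$ in $L^\infty(0,T;H)$.

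Then I would pass to the limit: by weak-$*$ compactness extract a subsequence with $u_n\rightharpoonup u$ in $L^\infty(0,T;H^2)$ and $\pt u_n\rightharpoonup \pt u$ in $L^\infty(0,T;H)$; the Aubin–Lions lemma gives strong convergence of $u_n$ in $\C(0,T;H^{2-\delta})$, hence $\|u_n\|_1^2\to\|u\|_1^2$ and $u_n^+\to u^+$ pointwise-in-time in $H$, which suffices to identify the nonlinear terms in the limit. From the equation one reads off $\ptt u\in L^\infty(0,T;H^{-2})$, and a standard interpolation/Lions–Magenes argument upgrades the weak continuity to $(u,\pt u)\in\C(0,T;\H)$, with the initial data attained. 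For uniqueness and continuous dependence, given two solutions $z^i=(u^i,\pt u^i)$ with data $z^i_0$, I would write the equation for $w=u^1-u^2$, pair with $\pt w$, and estimate the differences of the nonlinear terms: $\|u^1\|_1^2A^{1/2}u^1-\|u^2\|_1^2A^{1/2}u^2$ is handled by adding and subtracting $\|u^1\|_1^2A^{1/2}u^2$ and using the uniform $H^2$-bound, while $(u^1)^+-(u^2)^+$ is controlled by $\|w\|$ since $t\mapsto t^+$ is $1$-Lipschitz; Gronwall then yields $\|z^1(t)-z^2(t)\|_\H\le C(T)\|z^1_0-z^2_0\|_\H$, giving both uniqueness and continuous dependence.

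The main obstacle is the low regularity assumed on the forcing, $f\in L^1_{\mathrm{loc}}(0,T;H)$ rather than $L^2_{\mathrm{loc}}$: one cannot simply absorb $2\l f,\pt u_n\r$ into $\|f\|^2+\|\pt u_n\|^2$ and close a Gronwall inequality, so the energy estimate must be arranged as $\ddt\sqrt{\E_n+C}\le \|f(t)\|$ (after checking $\E_n+C\ge c\|\pt u_n\|^2$) and integrated directly; the rest of the argument is routine once the uniform bounds are in hand.
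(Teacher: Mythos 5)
Your argument is correct and coincides with the route the paper indicates (the proof is omitted there, being described as a standard Galerkin approximation combined with a slight generalization of the usual Gronwall lemma --- exactly your device $\ddt\sqrt{\E_n+C}\le\|f(t)\|$ for the $L^1_{\rm loc}$ forcing --- with the same uniform-in-time energy estimates used later for the absorbing set). The only slip is a harmless sign: the natural energy contains $-p\|u_n\|_1^2$, i.e.\ $\tfrac12\big(\|u_n\|_1^2-p\big)^2$ up to a constant, so the indefinite term arises for $p>0$ rather than $p<0$; your Young-inequality absorption into $\tfrac12\|u_n\|_1^4$ works either way.
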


We omit the proof of this result, which is based on a standard Galerkin approximation procedure (see, for istance \cite{B,B1}), together with a slight generalization of the usual Gronwall lemma. In particular, the uniform-in-time estimates needed to obtain the global existence are exactly the same we use in proving the existence of an absorbing set.

In light of Proposition~\ref{EU},
we define the {\it solution operator}
$$S(t)\in\C(\H,\H),\qquad\forall t\geq 0,$$
as
$$z=(u_0,u_1)\mapsto S(t)z=(u(t),\pt u(t)).$$
Besides, for every $z\in\H$, the map $t\mapsto S(t)z$ belongs
to $\C(\R^+,\H)$.
Actually, it is a standard matter to verify the joint
continuity
$$(t,z)\mapsto S(t)z\in\C(\R^+\times\H,\H).$$

\begin{remark}
\label{semigroup}
In the autonomous case, namely when $f$ is time-independent, the
semigroup property
$$S(t+\tau)=S(t)S(\tau)$$
holds for all $t,\tau\geq 0$. Thus,
$S(t)$ is a strongly continuous semigroup of operators on $\H$ which continuously depends on the initial data: for any initial data $z\in\H$,
$S(t)z$ is the unique weak solution to \eqref{ASTRATTO}, with
related norm given by
\begin{equation*}
\mathcal{E}(z)\,=\,\|z\|_\mathcal{H}^2\,=\,\|u\|_2^2+\|v\|^2.
\end{equation*}
\end{remark}

For any $z=(u, v)\in\H$, we define the {\it energy} corresponding to $z$ as
\begin{equation} \label{eq:Energia}
E(z)=\mathcal{E}(z)+\frac12\big(\|u\|_1^2-p\big)^2+k^2\|u^+\|^2,
\end{equation}
and, abusing the notation, we denote $E(S(t)z)$ by $E(t)$ for each given initial data $z\in \H$.
Multiplying the first equation in \eqref{ASTRATTO} by $\pt u$, because of the relation
$$
k^2\l u^+, \pt u \r= \frac {k^2}{2} \frac{\,d}{dt}\,(\| u^+\|^2),
$$
we
obtain the {\it energy identity}
\begin{equation}
\label{E} \ddt E+2\|\partial_t u\|^2=2\l \pt u,f\r.
\end{equation}
In particular, for every $T>0$, there exists a positive increasing
function $\Q_T$ such that
\begin{equation}
\label{TFIN} E(t)\leq \Q_T(E(0)),\qquad\forall t\in [0,T].
\end{equation}

\section{The Absorbing Set}

\noindent It is well known that the absorbing set gives a first rough estimate of the dissipativity of the system. In addition, it is the
preliminary step to scrutinize its asymptotic dynamics (see, for instance, \cite{TEM}). Here, due to the joint presence of geometric and cable-response nonlinear terms in \eqref{ASTRATTO}, a direct proof of the existence of the absorbing set via {\it explicit} energy estimates is nontrivial.
Indeed, the double nonlinearity cannot be handled by means of standard arguments, as either in \cite{MKW} or in \cite{ZMS}.
Dealing with a given time-dependent external force $f$ fulfilling suitable translation compactness properties, a direct proof of the existence of an absorbing set is achieved here by means of a generalized Gronwall-type lemma devised in \cite{GGP}.

An absorbing
set for the solution operator $S(t)$ (referred to the initial time $t=0$) is a bounded set
$\BB_{\H} \subset \H$
with the following property: for every $R\geq 0$, there is an {\it
entering time} $t_R\geq 0$ such that
$$\bigcup_{t\geq t_R}S(t)z\subset \BB_{\H},
$$
whenever $\|z\|_\H\leq R$.
In fact, we are able to establish a more general result.

\begin{theorem}
\label{thmABS} Let $f\in L^\infty(\R^+,H)$, and let $\pt f$ be
a translation bounded function in $L^2_{\rm loc}(\R^+,H^{-2})$, that
is,
\begin{equation}
\label{tb} \sup_{t\geq 0}\int_t^{t+1}\|\pt f(\tau)\|^2_{-2}
\d\tau=M<\infty.
\end{equation}
Then, there exists $R_0>0$ with the following property:
in correspondence of every $R\geq 0$, there is $t_0=t_0(R)\geq 0$ such that
$$E(t)\leq R_0,\qquad \forall t\geq t_0,
$$
whenever $E(0)\leq R$. Both $R_0$ and $t_0$ can be explicitly
computed.
\end{theorem}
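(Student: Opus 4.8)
The strategy is to build a Lyapunov-type functional $\Lambda$, equivalent to the energy $E$ of \eqref{eq:Energia} up to an additive constant, which satisfies a dissipative differential inequality, and then to close the estimate with the generalized Gronwall lemma of \cite{GGP}. The energy identity \eqref{E} supplies dissipation only in the velocity, through $2\|\pt u\|^2$, and no coercivity in the displacement; the standard remedy is to test the first equation in \eqref{ASTRATTO} also with $u(t)$. Using $\l\ptt u,u\r=\ddt\l\pt u,u\r-\|\pt u\|^2$, $\l Au,u\r=\|u\|_2^2$, $\l A^{1/2}u,u\r=\|u\|_1^2$ and $\l u^+,u\r=\|u^+\|^2$, one obtains
\begin{equation}\label{eq:multU}
\ddt\l\pt u,u\r=\|\pt u\|^2-\|u\|_2^2-\l\pt u,u\r+p\|u\|_1^2-\|u\|_1^4-k^2\|u^+\|^2+\l f,u\r .
\end{equation}
The essential point is that in \eqref{eq:multU} \emph{both} nonlinearities carry the favourable sign, $-\|u\|_1^4$ and $-k^2\|u^+\|^2$: a single displacement multiplier therefore generates coercive control of the geometric potential $\tfrac12(\|u\|_1^2-p)^2$ and of the cable potential $k^2\|u^+\|^2$ that enter $E$.

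To treat the forcing with the regularity at hand, I would integrate the term in \eqref{E} by parts in time, $2\l\pt u,f\r=2\ddt\l u,f\r-2\l u,\pt f\r$, which replaces $f$ by $\pt f$ (controlled through \eqref{tb}) and absorbs $-2\l u,f\r$ into the functional. This leads to the choice
\[
\Lambda(t)=E(t)+2\eps\l\pt u(t),u(t)\r-2\l u(t),f(t)\r ,
\]
with $\eps>0$ small, to be fixed. By the Poincar\'e inequalities \eqref{POINCARE}, the embedding $H\hookrightarrow H^{-2}$, and $f\in L^\infty(\R^+,H)$, the perturbation $2\eps\l\pt u,u\r-2\l u,f\r$ is bounded by $\tfrac12\big(\|u\|_2^2+\|\pt u\|^2\big)$ plus a constant, so $\Lambda$ and $E$ are equivalent up to an additive constant. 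Differentiating $\Lambda$, inserting \eqref{E} rewritten through the above integration by parts together with $\eps$ times \eqref{eq:multU}, the terms $2\l\pt u,f\r$ cancel and one is left with
\begin{equation}\label{eq:combo}
\ddt\Lambda+(2-2\eps)\|\pt u\|^2+2\eps\big(\|u\|_2^2+\|u\|_1^4+k^2\|u^+\|^2\big)=-2\l u,\pt f\r-2\eps\l\pt u,u\r+2\eps p\|u\|_1^2+2\eps\l f,u\r .
\end{equation}

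The heart of the proof — and the step I expect to be the main obstacle — is the absorption of the right-hand side of \eqref{eq:combo}. The term $2\eps p\|u\|_1^2$ is genuinely troublesome when $p>0$, i.e.\ past the critical buckling load, yet Young's inequality gives $2\eps p\|u\|_1^2\leq\eps\|u\|_1^4+\eps p^2$, so it is reabsorbed by the quartic dissipation at the cost of an explicit constant; by \eqref{POINCARE} and Young, $2\eps|\l\pt u,u\r|$ and $2\eps|\l f,u\r|$ are likewise absorbed by the $\|\pt u\|^2$ and $\|u\|_2^2$ dissipation once $\eps$ is chosen small (depending only on $\lambda_1$, $p$, $k$ and $\|f\|_{L^\infty(\R^+,H)}$). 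Since $\tfrac12(\|u\|_1^2-p)^2\leq\tfrac34\|u\|_1^4+\tfrac32 p^2$, the surviving coercive part controls $\delta\Lambda-C'$ for suitable $\delta,C'>0$, while the residual forcing obeys $-2\l u,\pt f\r\leq 2\|u\|_2\|\pt f\|_{-2}\leq g(t)\sqrt{1+\Lambda}$ with $g(t)=c\,\|\pt f(t)\|_{-2}$. Altogether \eqref{eq:combo} becomes
\[
\ddt\Lambda+\delta\Lambda\leq C'+g(t)\sqrt{1+\Lambda},\qquad\sup_{t\geq 0}\int_t^{t+1}g^2(\tau)\,\d\tau\leq c^2M<\infty ,
\]
which is precisely the hypothesis of the generalized Gronwall lemma of \cite{GGP}. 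That lemma yields $\Lambda(t)\leq\Lambda(0)\,\e^{-\delta t/2}+R_0'$ with $R_0'$ depending only on $\delta$, $C'$ and $M$; using again the equivalence of $\Lambda$ with $E$, bounding $\Lambda(0)$ in terms of $E(0)$, and invoking the a priori bound \eqref{TFIN} on an initial interval, one gets $E(t)\leq\Q(E(0))\,\e^{-\delta t/2}+R_0$ for an increasing function $\Q$, and choosing $t_0=t_0(R)$ with $\Q(R)\,\e^{-\delta t_0/2}\leq R_0$ concludes, all constants being explicit. The delicate part throughout is this bookkeeping: matching the two structurally different nonlinear terms against the single multiplier $\l\pt u,u\r$, and fixing $\eps$ so that every dissipation coefficient stays strictly positive, uniformly in the sign and the magnitude of $p$.
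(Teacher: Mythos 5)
Your proposal is correct and follows essentially the same route as the paper: the authors likewise perturb the energy by $-2\l u,f\r$ (so that the forcing enters only through $\pt f$) and by the displacement--velocity multiplier $\Upsilon=\l u,\pt u\r$, observe that the multiplier identity produces the good-signed terms $\|u\|_1^4$ and $k^2\|u^+\|^2$ that dominate the whole energy, and close with the generalized Gronwall lemma of \cite{GGP}. The only differences are cosmetic: the paper takes the multiplier with coefficient $1$ rather than a small $\eps$, absorbs $p\|u\|_1^2$ via the algebraic identity $\frac12(\|u\|_1^2-p)^2+p(\|u\|_1^2-p)=\frac12\|u\|_1^4-\frac12p^2$ instead of Young, and uses Young once more on $\l u,\pt f\r$ so that the right-hand side is a translation-bounded function of $t$ alone rather than $g(t)\sqrt{1+\Lambda}$.
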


We are able to establish Theorem \ref{thmABS}, leaning on the following Lemma.

\begin{lemma}[see Lemma 2.5 in \cite{GGP}]
\label{superl}
Let $\Lambda:\R^+\to\R^+$ be an absolutely continuous function
satisfying, for some $M\geq 0$, $\eps>0$,
the differential inequality
$$
\ddt\Lambda(t)+\eps \Lambda(t)\leq \varphi(t),
$$
where $\varphi:\R^+\to\R^+$ is any locally summable function such that
$$\sup_{t\geq 0}\int_t^{t+1}\varphi(\tau)\d \tau\leq M.$$
Then, there exist $R_1>0$ and $\gamma>0$ such that,
for every $R\geq 0$, it follows that
$$\Lambda(t)\leq R_1,\qquad\forall t\geq R^{1/\gamma}(1+\gamma M)^{-1},$$
whenever $\Lambda(0)\leq R$.
Both $R_1$ and $\gamma$ can be explicitly computed
in terms of  $M$ and $\eps$.
\end{lemma}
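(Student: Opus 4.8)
The plan is to integrate the differential inequality by the standard integrating-factor device and then to control the forcing through its translation-boundedness. First I would multiply $\ddt\Lambda(t)+\eps\Lambda(t)\leq\varphi(t)$ by $\e^{\eps t}$, recognise the left-hand side as $\ddt\big(\e^{\eps t}\Lambda(t)\big)$, and integrate over $[0,t]$ to get
$$
\Lambda(t)\leq \e^{-\eps t}\Lambda(0)+\e^{-\eps t}\int_0^t\e^{\eps s}\varphi(s)\,\d s .
$$
The first summand carries the initial datum and decays exponentially, while the second gathers the forcing and must be shown to remain bounded uniformly in $t$.

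The core of the argument is precisely this uniform bound, where the hypothesis $\sup_{t}\int_t^{t+1}\varphi\leq M$ enters. Letting $n$ denote the integer part of $t$ and splitting $\int_0^t$ into the unit windows $[j,j+1]$, $j=0,\dots,n-1$, together with the remainder $[n,t]$, on each piece I would bound $\e^{\eps s}$ by its maximum $\e^{\eps(j+1)}$ and invoke $\int_j^{j+1}\varphi\leq M$. Summing the geometric series yields
$$
\int_0^t\e^{\eps s}\varphi(s)\,\d s\leq M\sum_{j=0}^{n}\e^{\eps(j+1)}\leq \frac{M\,\e^{\eps(n+2)}}{\e^\eps-1},
$$
so that, since $\e^{-\eps t}\leq\e^{-\eps n}$,
$$
\e^{-\eps t}\int_0^t\e^{\eps s}\varphi(s)\,\d s\leq \frac{M\,\e^{2\eps}}{\e^\eps-1}=:C_\eps M ,
$$
a quantity depending only on $\eps$ and $M$.

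There remains the transient $\e^{-\eps t}\Lambda(0)\leq R\,\e^{-\eps t}$. Fixing any $\gamma>0$ and setting $t_0=t_0(R)=R^{1/\gamma}(1+\gamma M)^{-1}$, I would observe that $R\,\e^{-\eps t}$ decreases in $t$, hence for $t\geq t_0$ it is maximal at $t=t_0$, where it equals $g(R):=R\exp\big(-\eps R^{1/\gamma}/(1+\gamma M)\big)$. Because the decay in $R^{1/\gamma}$ dominates the linear growth, $g$ is continuous and vanishes at both $R=0$ and $R\to\infty$; a one-line computation places its maximum at $R_\ast=\big(\gamma(1+\gamma M)/\eps\big)^\gamma$ with $\sup_{R\geq0}g(R)=\big(\gamma(1+\gamma M)/\eps\big)^\gamma\e^{-\gamma}$. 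Combining the two estimates gives $\Lambda(t)\leq R_1$ for all $t\geq t_0$, with
$$
R_1=C_\eps M+\Big(\frac{\gamma(1+\gamma M)}{\eps}\Big)^\gamma\e^{-\gamma},
$$
explicit in $M$ and $\eps$ once $\gamma$ is chosen (e.g. $\gamma=1$).

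I expect the forcing estimate to be the only delicate step: one cannot bound $\int_0^t\e^{\eps s}\varphi\,\d s$ crudely, since $\varphi$ need be neither integrable on $\R^+$ nor pointwise bounded---only its sliding averages over unit intervals are controlled. The geometric-series packaging over those intervals is exactly what converts this weak local control into a uniform-in-time bound, and is the feature that makes the statement a genuine extension of the classical Gronwall lemma (recovered when $\varphi\in L^1$). Checking that the prescribed entering time $R^{1/\gamma}(1+\gamma M)^{-1}$ is adequate is then routine, reducing precisely to the finiteness of $\sup_R g(R)$ established above; intuitively, any positive power of $R$ eventually exceeds the logarithmic time $\eps^{-1}\log R$ that the exponential transient alone would demand.
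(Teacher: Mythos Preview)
The paper does not supply a proof of this lemma: it is quoted verbatim from \cite{GGP} and used as a black box, so there is no argument in the text to compare against. Your proof is correct and self-contained. The integrating-factor step is standard, and the crucial point---packaging $\int_0^t \e^{\eps s}\varphi(s)\,\d s$ into unit windows and summing the resulting geometric series---is exactly the right way to exploit the translation-boundedness hypothesis; your constants are sharp enough and the computation of $\sup_R g(R)$ verifying the stated entering time is accurate. This is, in fact, the same mechanism underlying the original proof in \cite{GGP}, so your approach is the expected one rather than an alternative route.
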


\begin{proof}[Proof of Theorem \ref{thmABS}]
Here and in the sequel, we will tacitly use several times the
Young and the H\"older inequalities, besides the usual Sobolev
embeddings. The generic positive constant $C$ appearing in this
proof may depend on $p$ and $\|f\|_{L^\infty(\R^+,H)}$.

On account of \eqref{E}, by means of the functional
\begin{equation}\label{elle}
\L(z)=E(z)-2\l u,f\r,
\end{equation}
we introduce the function
$$
\L(t)=E(t)-2\l u(t),f(t)\r,
$$
which satisfies the differential equality
\begin{equation}\label{derivata}
\ddt \L+2\|\partial_t u\|^2=-2\l u,\pt f \r.
\end{equation}
Because of the control
\begin{equation}
\label{derivataf}
2|\l u,\pt f\r| \leq \frac12 \|u\|_2^2 + 2\| \pt f\|^2_{-2},
\end{equation}
we obtain the differential inequality
\begin{equation*}
\ddt \L+2\|\partial_t u\|^2 \leq \frac 12 E + 2\| \pt f\|^2_{-2}.
\end{equation*}
Next, we consider the auxiliary functional $\Upsilon(z)=\l u,v\r$ and,
regarding  $\Upsilon(S(t)z)$ as $\Upsilon(t)$, we have
$$
\ddt\Upsilon+\Upsilon + \|u\|_2^2+\big(\|u\|_1^2-p\big)^2
+p\big(\|u\|_1^2-p\big) + k^2\|u^+\|^2 - \l
u,f\r=\|\pt u\|^2.
$$
Noting that
$$
\frac12\big(\|u\|_1^2-p\big)^2
+p\big(\|u\|_1^2-p\big)=\frac12\|u\|_1^4-\frac12 p^2,
$$
we are led to
\begin{equation*}
\ddt\Upsilon + \Upsilon + \frac12\|u\|_1^4 + \|u\|_2^2 + k^2\|u^+\|^2+\frac12\big(\|u\|_1^2-p\big)^2 - \l u,f\r
= \|\pt u\|^2+\frac12 p^2.
\end{equation*}
Precisely, we end up with
\begin{equation}
\label{UNODUE}
\ddt\Upsilon + \Upsilon + \frac12 E\leq \frac32 \|\partial_t
u\|^2+\frac{1}{2 \lambda_1}\|f\|^2+\frac{1}{2}p^2.
\end{equation}
Finally, we set
$$\Lambda(z)=\L(z)+\Upsilon(z)+C,$$
where $ C= \frac{2}{\lambda_1}\|f\|^2+\frac{1}{2\lambda_1} + \frac{|p|}{2\sqrt {\lambda_1}}$.
We first observe that $\Lambda(z)$ satisfies
\begin{equation}
\label{CTRL} \frac12 \E(z)\leq \frac12 E(z)\leq\Lambda(z)\leq 2 E(z)+c.
\end{equation}
In order to estimate $\Lambda$ from below, a straightforward calculation leads to
\begin{eqnarray*}
\Lambda(z)&\geq &E(z) - 2 |\l u,f\r | - | \Upsilon(z)| + C
\\
&\geq& E(z) - \frac12\|u\|_2^2 -  \frac 12\|v\|^2- \frac14\big(\|u\|_1^2-p\big)^2 -2 \|f\|_{-2}^2 - \frac{1}{2\lambda_1} - \frac{|p|}{2\sqrt {\lambda_1}} + C
\\
&=&\frac 12 \|u\|_2^2 + \frac 12 \|v\|^2 + \frac14\big(\|u\|_1^2-p\big)^2 + k^2\|u^+\|^2 -  2 \|f\|_{-2}^2 - \frac{1}{2\lambda_1} - \frac{|p|}{2\sqrt {\lambda_1}} + C
\\
&\geq& \frac 12 E(z) - \frac{2}{\lambda_1}\|f\|^2-\frac{1}{2\lambda_1} - \frac{|p|}{2\sqrt {\lambda_1}} +C\geq  \frac 12 E(z),
\end{eqnarray*}
where we take advantage of
\begin{align*}
 | \Upsilon(z)|&\leq \|u\|\|v\| \leq \frac{1}{\root 4 \of{ \lambda_1}} \|u\|_1\|v\| \leq \frac 12\|v\|^2 + \frac{1}{2\sqrt {\lambda_1}} \|u\|_1^2  \\
 &\leq \frac 12\|v\|^2  +  \frac{1}{2\sqrt {\lambda_1}}\big(\|u\|_1^2-p\big) + \frac{|p|}{2\sqrt {\lambda_1}} \\
 &\leq \frac 12\|v\|^2  +  \frac14 \big(\|u\|_1^2-p\big)^2 + \frac{|p|}{2\sqrt {\lambda_1}} +  \frac{1}{2\lambda_1}.
\end{align*}
The upper bound for $\Lambda$ can be easly achieved as follows
\begin{align*}
\Lambda(z)&\leq E(z) + \|u\|_2^2 + \|v\|^2+ \frac12 \|u\|_1^4 + \|f\|_{-2}^2 + \frac{1}{32\lambda_1} + C \\
&\leq 2E(z) + \frac{1}{\lambda_1}\|f\|^2 + \frac{1}{32\lambda_1} + C \leq 2E(z) + c,
\end{align*}
by virtue of
\begin{equation}\label{Ups2}
 | \Upsilon(z)|\leq \|u\|\|v\| \leq \frac{1}{\root 4 \of{ \lambda_1}} \|u\|_1\|v\| \leq \|v\|^2 + \frac{1}{4\sqrt {\lambda_1}} \|u\|_1^2 \leq  \|v\|^2  +  \frac12 \|u\|_1^4 + \frac{1}{32\lambda_1}.
\end{equation}
Going back to differential equation and making use of \eqref{derivata} and \eqref{UNODUE}, the function $\Lambda(t)= \Lambda (S(t)z)$  satisfies the identity
$$
\ddt\Lambda + \frac {\Lambda}{2} +  \frac {\Upsilon}{2} + \frac12\|u\|_2^2 + \frac12\|\partial_tu\|^2 +
\frac14\|u\|_1^4 + \frac12k^2\|u^+\|^2 =
-2\l u,\partial _tf\r +\frac{p^2}{2},
$$
and, as a consequence, we obtain the estimate
\begin{equation*}
\ddt\Lambda + \frac {\Lambda}{2} +  \frac 12 \big(\Upsilon + \|u\|_2^2 + \|\partial_tu\|^2 +
\frac12\|u\|_1^4  +4 \l u,\partial _tf\r \big) \leq \frac{p^2}{2}.
\end{equation*}
Now, using \eqref{Ups2} and \eqref{derivataf}, we have
$$
\ddt\Lambda + \frac {\Lambda}{2} \leq 2\| \partial _tf\|_{-2}^2 + c,
$$
where $c= \frac {1}{16 \lambda_1} + \frac{p^2}{2}$.
Thus, by virtue of \eqref{tb} and \eqref{CTRL}, Lemma~\ref{superl} yields
$$
E(t) \leq 2\Lambda(t) \leq 2R_1(M,c).
$$
\end{proof}

\begin{remark}
If the set of stationary solutions to \eqref{ASTRATTO}  shrinks to a single element, the subsequent asymptotic behavior of the system becomes quite simple. Indeed, this occurs when $p< p_c=\sqrt{\lambda_1}$. If this the case, the only trivial solution exists and is exponentially stable,  as it will be shown in Section 4. The more complex and then attractive situation occurs when the set of steady solutions contains a large (possibly infinite) amount  of elements. To this end, we recall here that the set of the bridge stationary-solutions (equilibria) has a very rich structure, even when $f=0$ (see \cite{GV}).
\end{remark}


\section{The Global Attractor}

In the remaining of the paper, we simplify the problem by assuming that the external force $f$ is time-independent. In which case, the operator $S(t)$ is a strongly continuous semigroup on $\H$ (see Remark\,\ref{semigroup}). Having been proved in Sect.\,3 the existence of the absorbing set $\BB$, we could then establish here the existence of a global attractor by showing that the semigroup $S(t)$ admits a bounded absorbing set in a more regular space and that it is uniformly compact for large values of $t$ (see, for instance, \cite[Theor.\,1.1]{TEM}). In order to obtain asymptotic compactness, the $\alpha$-contraction method should be employed  (see \cite{HAL} for more details).
If applied to \eqref{ASTRATTO}, however, such a strategy would need a lot of calculations and, what is more, would provide some regularity of the attractor only if the dissipation is large enough (see \cite{EM}, for instance).

Noting that in the autonomous case problem \eqref{ASTRATTO} becomes a {\it gradient system}, there is a way to overcome these difficulties by using an alternative approach which appeals to the existence of a Lyapunov functional in order to prove the existence of a global attractor. This technique has been successfully adopted in some recent papers  concerning some related problems, just as the longterm analysis of the transversal motion of extensible viscoelastic \cite{GPV} and thermoelastic \cite{GNPP} beams.

We recall that the global attractor $\A$ is the unique compact
subset of $\H$ which is at the same time
\begin{itemize}
\item[(i)] attracting:
$$\lim_{t\to\infty}\boldsymbol{\delta}(S(t)\BB,\A)\to 0,$$
for every bounded set $\BB\subset\H$, where $\boldsymbol{\delta}$
denotes the usual Hausdorff semidistance in $\H$; \item[(ii)]
fully invariant:
$$S(t)\A=\A,\qquad\forall t\geq 0.$$
\end{itemize}
We address the reader to the books \cite{CV,HAL,TEM} for a
detailed presentation of the theory of attractors.

\begin{theorem}
\label{MAIN} The semigroup $S(t)$ acting on $\H$ possesses a connected
global attractor $\A$ bounded in $\H^2$. Moreover, $\A$ coincides
with the unstable manifold of the set ${\mathcal S}$ of the
stationary points of $S(t)$, namely,
$$\A=
\Big\{z(0): z \text{ is a complete (bounded) trajectory of } S(t)
: \lim_{t\to \infty}\|z(-t)-{\mathcal
S}\|_{\H}=0\Big\}.
$$
\end{theorem}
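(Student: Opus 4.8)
The plan is to exploit the gradient structure of the autonomous problem. By the general theory of gradient systems (see \cite[Ch.~VII]{HAL} or \cite[Thm.~4.6]{TEM}), a semigroup $S(t)$ on $\H$ that (a) admits a strict Lyapunov functional, (b) has the property that the set $\S$ of stationary points is bounded, and (c) is \emph{asymptotically smooth} (possesses, for every bounded set, a bounded absorbing set together with asymptotic compactness), possesses a connected global attractor coinciding with the unstable manifold of $\S$. I would verify these three ingredients in turn.

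First, the Lyapunov functional. Taking $f$ time-independent and going back to the energy identity \eqref{E}, we have $\ddt E + 2\|\pt u\|^2 = 2\l \pt u,f\r = \ddt(2\l u,f\r)$, so that $\L(z)=E(z)-2\l u,f\r$ from \eqref{elle} satisfies $\ddt \L(t) = -2\|\pt u(t)\|^2 \le 0$. Hence $\L$ is nonincreasing along trajectories, and $\ddt\L(t)=0$ on an interval forces $\pt u\equiv 0$ there, which by \eqref{ASTRATTO} forces $u$ to be a stationary point; thus $\L$ is a strict Lyapunov functional. The bound \eqref{CTRL} shows $\L$ is bounded below and comparable to $E$, so $\L$ is also coercive on $\H$. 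Second, boundedness of $\S$: a stationary point $u\in H^2$ solves $Au-(p-\|u\|_1^2)A^{1/2}u+k^2u^+=f$; pairing with $u$ gives $\|u\|_2^2+\|u\|_1^4+k^2\|u^+\|^2 = (p-\|u\|_1^2)\|u\|_1^2 + p\|u\|_1^2 +\l u,f\r$ wait — more directly, $\|u\|_2^2 + \|u\|_1^4 + k^2\|u^+\|^2 \le p\|u\|_1^2 + \|f\|_{-2}\|u\|_2$, and absorbing the right-hand side by Young and Poincar\'e \eqref{POINCARE} yields a uniform bound on $\|u\|_2$; so $\S$ is bounded in $H^2$, hence (being closed) compact in $\H$.

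Third, and this is the main obstacle, the existence of a \emph{regular} bounded absorbing set and asymptotic compactness. Here I would follow the decomposition scheme of \cite{GPV}: split $S(t)z = L(t)z + K(t)z$, where $L(t)$ solves the linear problem $\ptt v + Av + \pt v - (p-\|u\|_1^2)A^{1/2}v = 0$ with the original initial data (carrying a damped linear part that decays exponentially, uniformly on the absorbing set $\BB$ from Section~3), and $K(t)$ solves the companion problem with zero initial data and the nonlinear forcing terms $(\|u\|_1^2-\text{const})A^{1/2}u$, $-k^2 u^+$, and $f$ on the right. The point is that on $\BB$ the forcing of the $K$-equation lies in $H^{-1}$ (the cable term $u^+$ and the cubic-type term are more regular than the dissipation requires), which lifts $K(t)z$ into a bounded set of $\H^1 = H^3\times H^1$, compactly embedded in $\H$. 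The delicate estimates are: (i) the uniform exponential decay of $L(t)$ despite the \emph{non-autonomous} coefficient $p-\|u(t)\|_1^2$, which is however uniformly bounded along orbits in $\BB$, so one closes a Gronwall argument exactly as in the exponential-stability lemma the paper announces for $p<p_c$, adapted here with the extra term controlled on $\BB$; and (ii) the higher-order energy estimate for $K$, multiplying its equation by $A^{1/2}\pt w$ and controlling $\|u^+\|_1$ by $\|u\|_1$ and the geometric term via the already-established $\H$-bounds. Combining, $S(t)\BB$ is, for large $t$, contained in a bounded subset of $\H^1$; intersecting the (closed, $S(t)$-invariant) absorbing set with that ball gives a compact absorbing set, so $S(t)$ is asymptotically smooth and the abstract theorem applies, producing the connected attractor $\A$ bounded in $\H^1$.

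Finally, to upgrade the regularity from $\H^1$ to $\H^2$ as claimed, I would bootstrap: on $\A$ the trajectories are complete and bounded in $\H^1$, so the forcing of the $K$-equation is now bounded in $H^1$, and repeating the decomposition (or a direct differentiated-energy estimate, differentiating \eqref{ASTRATTO} in time and testing suitably) pushes $\A$ into a bounded set of $\H^2$; the geometric term $\|u\|_1^2 A^{1/2}u$ and the cable term $k^2u^+$ are both controlled in $H^2$ once $u\in H^3\cap H^1_0$ is known, since $u^+\in H^2$ whenever $u\in H^2$ with the pinned boundary conditions. The characterization $\A = \{z(0): \lim_{t\to\infty}\|z(-t)-\S\|_\H=0\}$ is then the standard structural theorem for gradient systems with bounded $\S$, and connectedness of $\A$ follows since $\H$ is a connected Banach space and $\A$ is the attractor of a gradient semigroup.
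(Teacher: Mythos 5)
Your overall strategy (gradient system, strict Lyapunov functional $\L(z)=E(z)-2\l u,f\r$, boundedness of ${\mathcal S}$, asymptotic smoothness via a decomposition in the spirit of \cite{GPV}) is exactly the paper's strategy, and the first two ingredients are handled correctly. However, the decomposition you propose has a genuine gap at its most delicate point. You split off the linear part as $\ptt v+Av+\pt v-(p-\|u\|_1^2)A^{1/2}v=0$ and claim its exponential decay can be obtained ``as in the exponential-stability lemma for $p<p_{\rm c}$, adapted with the extra term controlled on $\BB$.'' This cannot work for $p\geq p_{\rm c}=\sqrt{\lambda_1}$, which is precisely the regime the theorem must cover: the operator $A-qA^{1/2}$ with $q=p-\|u(t)\|_1^2$ has a negative bottom eigenvalue $\lambda_1-q\sqrt{\lambda_1}$ whenever $q>\sqrt{\lambda_1}$ (e.g.\ along trajectories where $\|u\|_1$ is small), so the damped linear equation has exponentially growing modes; boundedness of the coefficient on $\BB$ is irrelevant, its size is what matters. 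The paper's fix, which you are missing, is the shift \eqref{gamma}: one adds $\alpha v$ with $\alpha$ large enough that $\|v\|_2^2-p\|v\|_1^2+\alpha\|v\|^2\geq\frac12\|v\|_2^2$ to the $v$-equation \eqref{DECAY}, restoring coercivity and hence uniform exponential decay of $L(t)z$ for \emph{every} $p$, and compensates with the lower-order term $-\alpha v$ in the $w$-equation \eqref{CPT}, where it is harmless. (Even with the shift, the decay of $L(t)$ is not a pure Gronwall argument: the coefficient $\|u(t)\|_1^2$ is time-dependent, and the paper needs the integral estimate of Lemma \ref{lemmaINT} on $\int\|\pt u\|$ to close it.)

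A second, smaller error: your bootstrap to $\H^2$ rests on the claim that $u^+\in H^2$ whenever $u\in H^2$ with pinned ends. This is false: if $u$ changes sign with $\partial_x u\neq0$ at the crossing, then $\partial_x(u^+)$ has a jump there and $\partial_{xx}(u^+)$ contains a Dirac mass, so the map $u\mapsto u^+$ does not preserve $H^2$ (it preserves $H^s$ only up to $s=3/2$). The paper sidesteps this entirely: in Lemma \ref{lemmaCPT} it goes directly for a bound of $K(t)z$ in $\H^2$ by building the cable contribution $2k^2\l u^+,Aw\r$ into the higher-order functional $\Phi_1$, so that the cable term is only ever paired as $\l Aw,u^+\r$ and $\l Aw,\partial_t u^+\r$ with $u^+,\partial_t u^+$ measured in $H$, never requiring any extra regularity of $u^+$. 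You should adopt both of these devices --- the $\alpha$-shift in the splitting and the weak pairing of the cable term --- to make your argument go through; the remaining steps (Lyapunov functional, boundedness of ${\mathcal S}$, and the appeal to the abstract gradient-system theorem for the characterization of $\A$ as the unstable manifold) match the paper and are fine.
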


\begin{remark}
Due to the regularity and the invariance of $\A$, we observe that
$S(t)z$ is a strong solution to \eqref{ASTRATTO} whenever
$z\in\A$.
\end{remark}

The set ${\mathcal S}$ of the bridge equilibria under a vanishing lateral load consists of all the pairs $(u,0)\in \H$ such that the function $u$ is a weak solution to the equation
$$Au - \big(p-\|u\|^2_1\big)A^{1/2}u+k^2u^+=0.$$
In particular,
$u$ solves the following boundary value problem on
the interval $[0,1]$
\begin{equation}
\label{STATICBEAM}
\begin{cases}
\partial_{xxxx}u+\big(b\,\pi^2-\|\partial_{x}u\|^2_{L^2(0,1)}\big)\partial_{xx}u+\kappa^2\pi^4u^+= 0,\\
\noalign{\vskip1mm} u(0)=u(1)=\partial_{xx}u(0)=\partial_{xx}u(1)=0,
\end{cases}
\end{equation}
where we let $k=\kappa\pi^2$, $\kappa\in\R$, and $p=b\,\pi^2$,
$b\in\R$. It is then apparent that ${\mathcal S}$ is bounded in
$H^2(0,1)\cap H^1_0(0,1)$ for every $b,\kappa\in\R$.

When $\kappa=0$, a general result has been established in
\cite{CZGP} for a class of non-vanishing sources. In \cite{BGV, BoV},
the same strategy with minor modifications has been applied  to
problems close to \eqref{STATICBEAM}, where the term $u^+$ is
replaced by $u$ (unyielding ties).

 The set of buckled solutions to problem \eqref{STATICBEAM} is built up and scrutinized in \cite{GV}.
In order to have a finite number of solutions, we need
all the bifurcation values to be
distinct. This occurrence trivially holds when $\kappa=0$, because
of the spectral properties of the operator $\partial_{xxxx}$. On
the contrary, for general values of $\kappa$, all critical values
``moves" when $\kappa$ increases, as well as in \cite{BoV}. Hence, it may happen that two different
bifurcation values overlap for special values of $\kappa$, in which case they are referred as {\it resonant values}.

\begin{figure}[ht]
\begin{center}
\includegraphics[width=15cm, height=5cm]{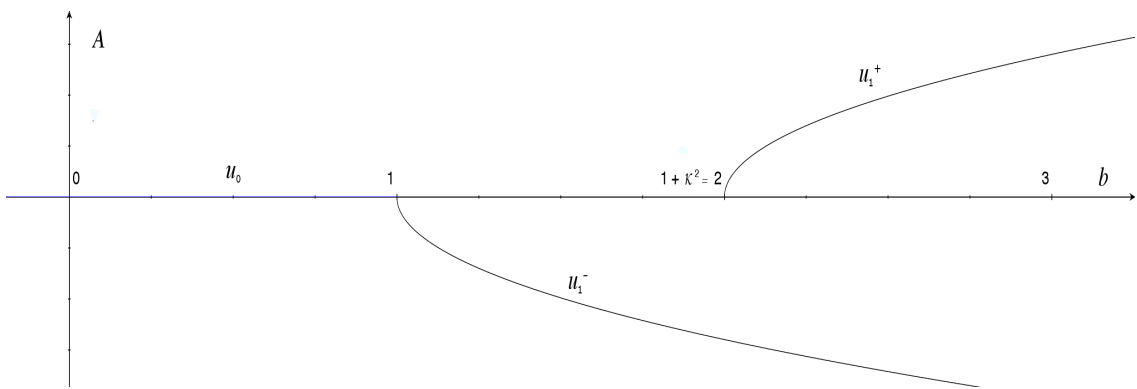}
\caption{The bifurcation picture for $ \kappa=1$.}
\label{Fig_1}
\end{center}
 \end{figure}

\noindent
Assuming that $\kappa=1$, for istance,  Fig.~1 shows the bifurcation picture of solutions in dependence on the applied axial load $p=b\pi^2$. In particular,
$u_0=0$ and
$$
u_1^\pm(x)=A_1^\pm\sin(\pi x),
\qquad
A_1^-=-\sqrt{2(b-1)\,},
\ A_1^+=\sqrt{2(b-2)\,}.
$$


\subsection{The Lyapunov Functional and preliminary estimates}

We begin to prove the existence of a Lyapunov
functional for $S(t)$, that is, a function $\L\in C(\H,\R)$ satisfying the
following conditions:
\begin{itemize}
\item[(i)] $\L(z)\to\infty$ if and only if $\|z\|_{\H}\to\infty$;
\item[(ii)] $\L(S(t)z)$ is nonincreasing for any $z\in\H$;
\item[(iii)] $\L(S(t)z)=\L(z)$ for all $t>0$ implies that
$z\in{\mathcal S}$.
\end{itemize}

\begin{proposition}
\label{LYAP} If $f$ is time-independent, the functional $\L$ defined in \eqref{elle} is a Lyapunov functional for $S(t)$.
\end{proposition}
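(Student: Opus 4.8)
The plan is to verify the three defining properties of a Lyapunov functional directly for $\L(z)=E(z)-2\l u,f\r$, using the energy identity~\eqref{E} as the main tool. For this, recall that in the autonomous case $f$ is time-independent, so that the computation leading to~\eqref{derivata} now gives $\ddt\L+2\|\pt u\|^2=-2\l u,\pt f\r=0$, i.e.\ along solutions
\begin{equation}\label{eq:Lder}
\ddt \L(S(t)z)=-2\|\pt u(t)\|^2\leq 0.
\end{equation}
This immediately yields property~(ii). For property~(i), I would observe that since $f$ is a fixed element of $H$, the linear term $2\l u,f\r$ is controlled by $2\|u\|\,\|f\|\leq 2\lambda_1^{-1/4}\|u\|_2\,\|f\|$ and hence, by Young's inequality, by $\tfrac12\|u\|_2^2+c\|f\|^2$ and also from above by $\tfrac12\|u\|_2^2+c\|f\|^2$; combining this with the definition~\eqref{eq:Energia} of $E$ and the Poincar\'e inequalities~\eqref{POINCARE}, one gets two-sided bounds of the form $\tfrac14\E(z)-c\leq \L(z)\leq 2E(z)+c$ (this is essentially the estimate~\eqref{CTRL} already established in the proof of Theorem~\ref{thmABS} for the closely related functional $\Lambda$). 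Since $E(z)\to\infty$ exactly when $\|z\|_\H=\E(z)^{1/2}\to\infty$ — the extra terms $\tfrac12(\|u\|_1^2-p)^2+k^2\|u^+\|^2$ in $E$ are nonnegative and continuous, so they cannot prevent blow-up and, being bounded on bounded sets of $H^2$, cannot force it either — property~(i) follows.

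The only substantive point is property~(iii). Suppose $\L(S(t)z)=\L(z)$ for all $t>0$. Integrating~\eqref{eq:Lder} over $[0,t]$ forces $\int_0^t\|\pt u(\tau)\|^2\d\tau=0$ for every $t$, hence $\pt u(t)\equiv 0$, so $u(t)\equiv u_0$ is constant in time and $v=\pt u\equiv 0$. Plugging $\ptt u=0$ and $\pt u=0$ into the first equation of~\eqref{ASTRATTO} gives
\begin{equation*}
Au_0-\big(p-\|u_0\|_1^2\big)A^{1/2}u_0+k^2u_0^+=f,
\end{equation*}
which says precisely that $u_0$ is a steady state and $z=(u_0,0)\in\mathcal S$. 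Thus~(iii) holds.

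The main obstacle — really a matter of care rather than a deep difficulty — is justifying the formal differentiation in~\eqref{eq:Lder} at the level of regularity guaranteed by Proposition~\ref{EU}, namely $(u,\pt u)\in\C(0,T;\H)$, where the energy identity is a priori only known in integrated form. The standard remedy is to work with the integrated energy identity $E(t)+2\int_0^t\|\pt u\|^2\d\tau=E(0)$ (valid for weak solutions and recorded as~\eqref{E}), subtract $2\l u(t),f\r-2\l u_0,f\r$, and note that since $f$ is time-independent this last difference equals $2\int_0^t\l \pt u(\tau),f\r\d\tau$; hence $\L(t)$ is absolutely continuous and~(ii) and~(iii) can be read off the integrated identity without ever differentiating. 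With this understood, the proof is short, and I would present it in that integrated form to keep it rigorous.
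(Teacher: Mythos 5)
Your proposal is correct and follows essentially the same route as the paper: two-sided bounds of $\L$ by the energy $E$ for property (i), the identity $\ddt\L(S(t)z)=-2\|\pt u\|^2\le 0$ (i.e.\ \eqref{derivata} with $\pt f=0$) for property (ii), and the deduction $\pt u\equiv 0\Rightarrow u\equiv u_0$, $z=(u_0,0)\in\mathcal S$ for property (iii). Your extra remarks --- writing out the stationary equation explicitly and noting that the argument can be run on the integrated energy identity to avoid differentiating at the weak-solution level --- are sound refinements of the same argument rather than a different approach.
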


\begin{proof}
Assertion (i) holds by the continuity of $\L$ and by means of the estimates
$$\frac12 E(z) - c \leq\L(z)\leq \frac 32 E(z)+c.$$
Using \eqref{derivata}, we obtain quite directly
\begin{equation} \label{LiapDis}
\frac{d}{dt}\L(S(t)z) =-2\|\pt u(t)\|^2 \leq 0,
\end{equation}
which proves the decreasing monotonicity of $\L$ along the trajectories
departing from $z$. Finally, if $\L(S(t)z)$ is constant in time, we have that $\pt
u=0$ for all $t$, which implies that $u(t)$ is constant. Hence,
$z=S(t)z=(u_0,0)$ for all $t$, that is, $z\in{\mathcal S}$.
\end{proof}

The existence of a Lyapunov functional ensures that $E(t)$ is
bounded. In particular, bounded sets have bounded orbits.

\begin{notation}
Till the end of the paper, $Q: \R^+_0\to\R^+$ will denote a {\it generic} increasing monotone function depending explicity only on $R$ and implicity on the structural constants of the problem. The actual expression of $Q$ may change, even within the same
line of a given equation.
\end{notation}

\begin{lemma}
\label{energia-limitata} Given $f\in H$, for all $t>0$ and initial data $z\in \H$ with $\|z\|_{\H}\leq R$,
\begin{equation}
\label{energy} \E(t)\leq Q(R).
\end{equation}
\end{lemma}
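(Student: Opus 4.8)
The plan is to show that the energy $E(t)$ is uniformly bounded in time by a quantity depending only on $R$, and then conclude the estimate on $\E(t)$ since $\E(z)\le E(z)$ by definition \eqref{eq:Energia}. The key tool is the Lyapunov functional $\L$ from \eqref{elle} together with its dissipation identity \eqref{LiapDis}.

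First I would recall from Proposition~\ref{LYAP} (assertion (i) and its proof) the two-sided control
\begin{equation*}
\tfrac12 E(z)-c\leq \L(z)\leq \tfrac32 E(z)+c,\qquad\forall z\in\H,
\end{equation*}
which in the autonomous setting follows from \eqref{elle} and the bound $2|\l u,f\r|\le \tfrac12\|u\|_2^2+2\|f\|_{-2}^2$ together with Poincar\'e \eqref{POINCARE}. Next, by \eqref{LiapDis} the map $t\mapsto\L(S(t)z)$ is nonincreasing, hence $\L(S(t)z)\le\L(z)$ for all $t\ge0$. Combining these three facts gives, for $\|z\|_\H\le R$,
\begin{equation*}
\tfrac12 E(t)-c\leq \L(S(t)z)\leq \L(z)\leq \tfrac32 E(0)+c=\tfrac32 E(z)+c,
\end{equation*}
so that $E(t)\le 3E(z)+4c$. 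It then remains to bound $E(z)$ itself in terms of $R=\|z\|_\H$: from \eqref{eq:Energia}, $E(z)=\E(z)+\tfrac12(\|u\|_1^2-p)^2+k^2\|u^+\|^2$, and since $\|u\|_1^2\le\lambda_1^{-1/2}\|u\|_2^2$ by \eqref{POINCARE}, $\|u^+\|^2\le\|u\|^2\le\lambda_1^{-1/2}\|u\|_2^2$, and $\|u\|_2^2\le\|z\|_\H^2=\E(z)\le R^2$, all three terms are controlled by a fixed increasing function of $R$. Therefore $E(z)\le Q(R)$ and hence $E(t)\le Q(R)$, which yields $\E(t)\le E(t)\le Q(R)$ as claimed.

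There is essentially no obstacle here: the lemma is a soft consequence of the monotonicity of the Lyapunov functional established in Proposition~\ref{LYAP}, and the only thing to be a little careful about is that in the autonomous case the inequality \eqref{derivataf} must be applied with $f$ time-independent so that the constant $c$ in the two-sided bound on $\L$ is genuinely fixed (depending only on $p$ and $\|f\|$), not on $t$. Absorbing all such fixed constants into the generic monotone function $Q$ of the preceding \textbf{Notation} then finishes the argument.
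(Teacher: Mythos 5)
Your proposal is correct and follows essentially the same route as the paper: monotonicity of the Lyapunov functional $\L$ from \eqref{LiapDis}, the two-sided comparability of $\L$ with the energy, and the bound $E(z)\leq Q(R)$ for $\|z\|_{\H}\leq R$ (the paper works directly with $\E$ by dropping the nonnegative terms of $E$ in the lower bound for $\L$, but this is an immaterial difference). The only nitpick is the exponent in your Poincar\'e application: \eqref{POINCARE} gives $\|u\|^2\leq\lambda_1^{-1/2}\|u\|_1^2\leq\lambda_1^{-1}\|u\|_2^2$, not $\lambda_1^{-1/2}\|u\|_2^2$, which is harmlessly absorbed into $Q$.
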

\begin{proof}
Inequality \eqref{LiapDis} ensures that
\begin{equation} \nonumber
\begin{split}
\L(t)=\L(S(t)z) \leq \L(z)\leq Q(R) \ , \quad \forall t \geq 0.
\end{split}
\end{equation}
Moreover, taking into account that
$$\left\| u(t)\right\| ^{2} \leq \frac{1}{\lambda_1}\left\| u(t)\right\|_{2} ^{2} \leq \frac{1}{ \lambda_1}\E(t),$$
we obtain the estimate
$$\L(t) \geq \E(t) - 2\left\langle f\,,\,u(t)\right\rangle \geq \E(t) - 2 \left\| f\right\|_{-2} ^{2} - \frac 12 \left\| u(t)\right\|_{2} ^{2} \geq
\frac 12 \E(t) - \frac{2}{\lambda_1} \left\|
f\right\| ^{2}.$$
Finally, we have
$$\E(t)\leq 2\L(t) +
 \frac{4}{\lambda_1} \left\| f\right\| ^{2} \leq 2Q(R) + \frac{4}{\lambda_1} \left\| f\right\| ^{2} = Q(R).$$
\end{proof}

\begin{lemma} \label{FunzionaleF}
Let $p <\, \sqrt{\lambda_1}$ and
$
\mathcal F_p(u)\,=Au\,-\,p A^{\frac{1}{2}}u\,
$.
Then
$$\left\langle \mathcal F_p(u)\,,\,u\right\rangle \geq C(p) \left\| u\right\| _{2}^{2}, $$
where
\begin{equation}
\label{Cp}
C(p)=
\begin{cases}
1, \qquad\qquad\quad p\leq 0 \\
\left(1- \frac {p}{\sqrt{\lambda_1}}\right), \quad 0<p<\sqrt{\lambda_1}.
\end{cases}
\end{equation}
\end{lemma}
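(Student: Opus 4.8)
The plan is to reduce everything to a single scalar inequality between the norms $\|u\|_1$ and $\|u\|_2$. First I would rewrite the quadratic form using the definitions of the fractional scale: since $\|u\|_2^2=\|A^{1/2}u\|^2=\langle Au,u\rangle$ and $\|u\|_1^2=\|A^{1/4}u\|^2=\langle A^{1/2}u,u\rangle$, one gets at once
$$
\langle \mathcal F_p(u),u\rangle=\langle Au,u\rangle-p\,\langle A^{1/2}u,u\rangle=\|u\|_2^2-p\,\|u\|_1^2 .
$$
So the whole statement is the claim $\|u\|_2^2-p\|u\|_1^2\geq C(p)\|u\|_2^2$ for all $u\in H^2$.

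For $p\leq 0$ there is nothing to do: the term $-p\|u\|_1^2$ is nonnegative, hence $\langle \mathcal F_p(u),u\rangle\geq\|u\|_2^2$, which is exactly the bound with $C(p)=1$.

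For $0<p<\sqrt{\lambda_1}$ the inequality to be proved is equivalent, after subtracting $(1-p/\sqrt{\lambda_1})\|u\|_2^2$ from both sides, to $p\|u\|_1^2\leq (p/\sqrt{\lambda_1})\|u\|_2^2$, i.e.\ to $\sqrt{\lambda_1}\,\|u\|_1^2\leq\|u\|_2^2$. This is precisely the generalized Poincar\'e inequality \eqref{POINCARE} with $r=1$, namely $\lambda_1\|u\|_1^4\leq\|u\|_2^4$, after taking square roots. Multiplying by $p>0$ and recombining gives $\|u\|_2^2-p\|u\|_1^2\geq\|u\|_2^2-(p/\sqrt{\lambda_1})\|u\|_2^2=(1-p/\sqrt{\lambda_1})\|u\|_2^2$, which is the desired estimate; note $C(p)>0$ exactly because $p<\sqrt{\lambda_1}$.

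There is no real obstacle here: the only subtlety is bookkeeping with the index convention of the $H^r$ scale (so that $\langle Au,u\rangle=\|u\|_2^2$ and $\langle A^{1/2}u,u\rangle=\|u\|_1^2$), after which the $p\leq 0$ case is immediate and the $0<p<\sqrt{\lambda_1}$ case is a one-line application of \eqref{POINCARE}. Alternatively, the same bound can be read off from the spectral decomposition of $A$, writing $u=\sum_n u_n e_n$ with $Ae_n=\lambda_n e_n$ and using $\lambda_n\geq\lambda_1$ termwise, but the Poincar\'e route is shorter given what is already available.
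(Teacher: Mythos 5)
Your proof is correct and follows essentially the same route as the paper: reduce to the identity $\langle \mathcal F_p(u),u\rangle=\|u\|_2^2-p\|u\|_1^2$, note the $p\leq 0$ case is immediate, and for $0<p<\sqrt{\lambda_1}$ apply the generalized Poincar\'e inequality \eqref{POINCARE} in the form $\sqrt{\lambda_1}\,\|u\|_1^2\leq\|u\|_2^2$. Nothing to add.
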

\begin{proof}
Because of the identity
$$\left\langle \mathcal F_p(u)\,,\,u\right\rangle = \left\| u\right\|
_{2}^{2}-\,p \left\| u\right\| _{1}^{2},$$
the thesis is trivial when $p \leq 0$.
On the other hand, when $0<p<\sqrt{\lambda_1}$ we have
$$\left\langle \mathcal F_p(u)\,,\,u\right\rangle = \left\| u\right\|
_{2}^{2}-p \, \left\| u\right\| _{1}^{2}\,\geq \left( 1-\frac{p
}{\sqrt{\lambda_1}}\right) \left\|
u\right\| _{2}^{2} .$$
\end{proof}
We are now in a position to prove the following
\begin{theorem} \label{exp-stab}
When $f\,=0$, the solutions to \eqref{BEAM}--\eqref{BEAMBC} decay
exponentially, i.e.
$$
\E(t)\,\leq\,c_0\,\E(0)\,e^{- c t}
$$
with $c_0$ and $c$ suitable positive constants, provided that
$p\,<\, \sqrt{\lambda_1}$.
\end{theorem}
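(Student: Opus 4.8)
\textbf{Proof plan for Theorem \ref{exp-stab}.}
The plan is a standard perturbed–energy (Lyapunov) argument, tailored to the coercivity range $p<\sqrt{\lambda_1}=p_c$. Since $p$ may be nonzero, the natural energy $E$ in \eqref{eq:Energia} does not vanish at the trivial equilibrium — it is bounded below by $\tfrac12p^2$ — so I would work with the shifted energy
\[
\tilde E(t)=E(t)-\tfrac12p^2=\langle\mathcal F_p(u),u\rangle+\|\pt u\|^2+\tfrac12\|u\|_1^4+k^2\|u^+\|^2,\qquad \mathcal F_p(u)=Au-pA^{1/2}u .
\]
By Lemma~\ref{FunzionaleF} and $C(p)\le 1$ one has $C(p)\,\E(t)\le\tilde E(t)$, while the energy identity \eqref{E} with $f=0$ reads $\ddt\tilde E=-2\|\pt u\|^2\le0$; in particular $\tilde E$, and hence $\E$, stay bounded (this is also Lemma~\ref{energia-limitata}).

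Next I would reuse the auxiliary functional $\Upsilon(t)=\langle u(t),\pt u(t)\rangle$ from Section~3. Pairing the equation (with $f=0$) with $u$ and using $\langle u,Au\rangle=\|u\|_2^2$, $\langle u,A^{1/2}u\rangle=\|u\|_1^2$ and $\langle u,u^+\rangle=\|u^+\|^2$, one gets
\[
\ddt\Upsilon=\|\pt u\|^2-\langle\mathcal F_p(u),u\rangle-\Upsilon-\|u\|_1^4-k^2\|u^+\|^2
\le\|\pt u\|^2-C(p)\|u\|_2^2-\Upsilon-\|u\|_1^4-k^2\|u^+\|^2 .
\]
For a small parameter $\eps>0$ I then set $\Phi=\tilde E+\eps\Upsilon$. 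From the Poincaré chain $\|u\|\le\lambda_1^{-1/2}\|u\|_2$ (consequence of \eqref{POINCARE}) one has $|\Upsilon|\le\lambda_1^{-1/2}\|u\|_2\|\pt u\|\le\tfrac12\lambda_1^{-1/2}\E\le\tfrac1{2\sqrt{\lambda_1}C(p)}\tilde E$, so that for $\eps\le\sqrt{\lambda_1}\,C(p)$ we get the equivalence $\tfrac12\tilde E\le\Phi\le\tfrac32\tilde E$. Combining $\ddt\tilde E=-2\|\pt u\|^2$ with the bound for $\ddt\Upsilon$, and absorbing $-\eps\Upsilon$ by $\tfrac{\eps C(p)}{2}\|u\|_2^2+\tfrac{\eps}{2\lambda_1C(p)}\|\pt u\|^2$ (weighted Young), one obtains, once $\eps$ is chosen so small that $2-\eps-\tfrac{\eps}{2\lambda_1C(p)}\ge1$,
\[
\ddt\Phi\le-\|\pt u\|^2-\tfrac{\eps C(p)}{2}\|u\|_2^2-\eps\|u\|_1^4-\eps k^2\|u^+\|^2\le-\nu\,\tilde E ,
\]
for some $\nu=\nu(p,k,\lambda_1)>0$, where in the last step one dominates $\tilde E$ by the four dissipative terms using $\langle\mathcal F_p(u),u\rangle\le(1+|p|\lambda_1^{-1/2})\|u\|_2^2$. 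By the equivalence $\Phi\sim\tilde E$ this is $\ddt\Phi\le-c\,\Phi$ with $c=\tfrac23\nu$, whence $\Phi(t)\le\Phi(0)e^{-ct}$ and therefore $\E(t)\le\tfrac1{C(p)}\tilde E(t)\le\tfrac2{C(p)}\Phi(t)\le\tfrac3{C(p)}\tilde E(0)e^{-ct}$. Finally $\tilde E(0)\le(1+|p|\lambda_1^{-1/2}+k^2\lambda_1^{-1/2})\|u_0\|_2^2+\|u_1\|^2+\tfrac1{2\lambda_1}\|u_0\|_2^4\le\big(C+\tfrac1{2\lambda_1}\E(0)\big)\E(0)$, which gives $\E(t)\le c_0\,\E(0)e^{-ct}$, the rate $c$ depending only on $p,k,\lambda_1$ and the prefactor $c_0$ being a monotone function of $\|z\|_\H$ — this last dependence is genuine, caused by the quartic term $\|u\|_1^4$.

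The only delicate point is the hypothesis $p<p_c=\sqrt{\lambda_1}$: it is precisely what makes $\mathcal F_p$ coercive (Lemma~\ref{FunzionaleF}), hence what makes the term $-\eps C(p)\|u\|_2^2$ available both to absorb the indefinite perturbation $\eps\Upsilon$ and to close the differential inequality; for $p\ge p_c$ the quantity $\langle\mathcal F_p(u),u\rangle$ can be non-positive and, consistently with the appearance of nontrivial buckled equilibria, no such decay holds. A routine check, which I do not expect to cause trouble, is that all the above manipulations are legitimate for the weak solutions of Proposition~\ref{EU}, since $u\in\C(0,T;H^2)$ is an admissible test function and $\ptt u\in\C(0,T;H^{-2})$, and that $\eps$ is fixed in terms of $p,k,\lambda_1$ alone.
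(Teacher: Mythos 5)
Your proposal is correct and follows essentially the same route as the paper: both work with the shifted energy $E-\tfrac12p^2=\langle\mathcal F_p(u),u\rangle+\|\pt u\|^2+\tfrac12\|u\|_1^4+k^2\|u^+\|^2$, perturb it by $\eps\Upsilon$ with $\Upsilon=\langle u,\pt u\rangle$, invoke the coercivity of $\mathcal F_p$ from Lemma~\ref{FunzionaleF} (which is exactly where $p<\sqrt{\lambda_1}$ enters), establish the equivalence of the perturbed functional with $\E$, and close a linear differential inequality by Gronwall. Your observation that the prefactor $c_0$ must depend on the size of the initial data because of the quartic term matches the paper's $c_0=2\lambda_1 Q(\|z\|_{\H})/\varepsilon$.
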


\begin{proof}
Let ${\Phi}$ be the functional
$$\Phi(z)= \E(z) + \varepsilon \Upsilon(z) - \frac 12 p^2,
$$
where the constant
\begin{equation} \label{asterisco}
\varepsilon = {\rm min} \{\lambda _{1}C(p), 1\}
\end{equation}
is positive provided that $p <\sqrt{\lambda_1}$. In view of applying Lemma \ref{FunzionaleF}, we remark that
$$
\Phi =\left\langle \mathcal F_p(u)\,,\,u\right\rangle+\left\|
\partial_t u\right\| ^{2}+\frac{1}{2}\left\| u\right\| _{1}^{4}+\varepsilon
\left\langle u,\partial_t u\right\rangle+k^2\left\| u^+\right\|
^{2}.
$$
The first step is to prove the equivalence between $\mathcal{E}$
and $\Phi$, that is
\begin{equation} \label{Phi_bounds}
\frac \varepsilon{2\,\lambda_1}\,\mathcal{E}\leq \Phi\leq Q(\|z\|_{\H})\,\,\mathcal{E}\,.
\end{equation}
By virtue of \eqref{POINCARE}, \eqref{asterisco} and Lemma \ref{FunzionaleF} the lower bound is provided by
$$
\Phi\geq  \left( C(p) -\frac{\varepsilon }{2\,\lambda _{1}
}\right) \left\| u\right\| _{2}^{2}+\left( 1-\frac{\varepsilon }{2\,}\right) \left\|
\partial_t u\right\| ^{2} \geq \frac \varepsilon{2\,\lambda_1}\,\E\,.
$$

On the other hand, by applying Young inequality and using
(\ref{POINCARE}), we can write the following chain of inequalities which gives the upper bound of $\Phi$.
\begin{eqnarray*}
\Phi &\leq&
 \left( C(p)+\frac{k^2}{ \lambda
_{1}}+\frac{1}{2 \,\lambda _{1}}\right) \left\| u\right\|
_{2}^{2}+\left( 1+\frac{\varepsilon ^{2}}{2}\right) \left\|
\partial_t u\right\| ^{2}- p \left\| u\right\|
_{1}^{2}+\frac{1}{2}\left\| u\right\| _{1}^{4}\leq
\\
&\leq &\left( 1+ C(p)+\frac{k^2}{ \lambda _{1}}+\frac{1}{2 \,\lambda
_{1}}+\frac{\varepsilon ^{2}}{2}\right) \E + \left\| u\right\|
_{1}^{2}\left(\frac{1}{2}\left\| u\right\| _{1}^{2}-p\right) .
\end{eqnarray*}
In particular, from \eqref{energy} and \eqref{Cp}
 we find
$$
\Phi\leq \left( 2+\frac{k^2}{ \lambda _{1}}+\frac{1}{2 \,\lambda
_{1}}+\frac{\varepsilon ^{2}}{2}+\frac{Q(\|z\|_{\H})}{ \sqrt{\,\lambda
_{1}}} \,\right) \E  = Q(\|z\|_{\H})\,\,\mathcal{E}.
$$
The last step is to prove the exponential decay of $\Phi$.
To this aim, we obtain the identity
\begin{equation} \nonumber
\frac{d}{dt}\, \Phi+\varepsilon \Phi+2\left( 1
-\varepsilon \right) \left\| \partial_t u\right\|
^{2}+\frac{\varepsilon }{2}\left\| u\right\| _{1}^{4}+\varepsilon
\left( 1 -\varepsilon \right) \left\langle \partial_t
u\,,\,u\right\rangle =0,
\end{equation}
where $\eps$ is given by \eqref{asterisco}.
Exploiting the Young inequality and  \eqref{Phi_bounds}, we have
\begin{equation} \nonumber
\frac{d}{dt}\, \Phi+\varepsilon \Phi+ \left(
1 -\varepsilon \right) \left\| \partial_t u\right\| ^{2}\leq
\frac{\varepsilon^2\left( 1 -\varepsilon \right)}{4\lambda_{1}
}\,\left\| u\right\| ^{2}_{2} \ \leq \frac{\varepsilon\left( 1 -\varepsilon \right)}{2}\,\Phi,
\end{equation}
from which it follows
\begin{equation} \nonumber
\frac{d}{dt}\, \Phi+ \frac{\varepsilon\left( 1 +\varepsilon \right)}{2} \Phi\leq 0.
\end{equation}
Letting $c = {\varepsilon\left( 1 +\varepsilon \right)}/{2}$, by virtue of Lemma \ref{superl} (with $M=0$) and \eqref{Phi_bounds} we have
$$\frac \varepsilon{2\,\lambda_1}\,\E(t)\leq\Phi\left( t\right) \leq \Phi\left( 0\right)
\,e^{- \,c\,t}\leq Q(\|z\|_{\H})\,\E\left( 0\right)\,e^{- \,c\,t}.$$
The thesis follows by putting $c_{0}=2\,\lambda_1Q(\|z\|_{\H})/\varepsilon$.
\end{proof}

\par
The existence of a Lyapunov functional, along with the fact that
${\mathcal S}$ is a bounded set, allow us prove the existence of
the attractor by showing a
suitable (exponential) asymptotic compactness property of the
semigroup, which will be obtained exploiting a particular
decomposition of $S(t)$ devised in \cite{GPV} and following a general result (see \cite{CP}, Lemma 4.3), tailored to our particular case.


\subsection{The Semigroup Decomposition}

\noindent By the interpolation inequality $\|u\|_1^2\leq \|u\|\|u\|_2$ and (\ref{POINCARE})
it is clear that
\begin{equation}
\label{gamma} \frac12\|u\|_2^2\leq
\|u\|_2^2-p\|u\|_1^2+\alpha\|u\|^2\leq m\|u\|_2^2,
\end{equation}
provided that $\alpha>0$ is large enough and for some $m=m(p,\alpha)\geq 1$.

Again, $R>0$ is fixed and $\|z\|_{\H}\leq R$. Choosing $\alpha>0$
such that \eqref{gamma} holds, according to the scheme first proposed in \cite{GPV}, we decompose the solution $S(t)z$
into the sum
$$S(t)z=L(t)z+K(t)z,$$
where
$$L(t)z=(v(t),\pt v(t))\qquad\text{and}\qquad
K(t)z=(w(t),\pt w(t))$$
solve the systems
\begin{equation}
\label{DECAY}
\begin{cases}
\ptt v+Av+\pt v-
(p-\|u\|^2_1)A^{1/2}v+\alpha v= 0,\\
\noalign{\vskip1.5mm} \left(v(0),\pt v(0)\right)=z,
\end{cases}
\end{equation}
and
\begin{equation}
\label{CPT}
\begin{cases}
\ptt w+Aw+\pt w-
(p-\|u\|^2_1)A^{1/2}w-\alpha v+k^2u^+= f,\\
\noalign{\vskip1.5mm} (w(0),\pt w(0))=0.
\end{cases}
\end{equation}

\noindent The next three lemmas show the asymptotic smoothing property of
$S(t)$, for initial data bounded by $R$. We begin to prove the
exponential decay of $L(t)z$. Then, we prove the asymptotic
smoothing property of ${K}(t)$.

\begin{lemma}
\label{lemmaDECAY} There is $\omega=\omega(R)>0$ such that
$$\|L(t)z\|_{\H}\leq Ce^{-\omega t}.$$
\end{lemma}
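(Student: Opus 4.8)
The plan is to run a standard energy estimate on the linear-in-$v$ system \eqref{DECAY}, treating the coefficient $p-\|u(t)\|_1^2$ as a known time-dependent function which, by Lemma~\ref{energia-limitata}, is bounded on $[0,\infty)$ by a constant depending only on $R$. First I would introduce the natural energy for \eqref{DECAY},
$$
\Psi(t)=\|\pt v\|^2+\|v\|_2^2-\big(p-\|u\|_1^2\big)\|v\|_1^2+\alpha\|v\|^2,
$$
and observe, using the interpolation inequality $\|v\|_1^2\le\|v\|\,\|v\|_2$ together with \eqref{POINCARE} and the uniform bound $\|u(t)\|_1^2\le Q(R)$, that by choosing $\alpha=\alpha(R)>0$ large enough (as in \eqref{gamma}) one has the two-sided control $c_1\|L(t)z\|_\H^2\le \Psi(t)\le c_2\|L(t)z\|_\H^2$ with $c_1,c_2>0$ depending on $R$. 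Multiplying the equation in \eqref{DECAY} by $\pt v$ gives
$$
\ddt\Psi+2\|\pt v\|^2=\ddt\big[\|u\|_1^2\big]\,\|v\|_1^2,
$$
so the only obstruction to monotone decay is the term coming from the time derivative of the coefficient.

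To get a true exponential rate I would perturb $\Psi$ by a multiplier term, setting $\Phi_\delta(t)=\Psi(t)+\delta\l v,\pt v\r$ with $\delta=\delta(R)>0$ small. The cross term is controlled by $|\delta\l v,\pt v\r|\le\frac12\|\pt v\|^2+C\delta^2\|v\|_2^2$, so for $\delta$ small $\Phi_\delta$ is still equivalent to $\|L(t)z\|_\H^2$; and computing $\ddt\Phi_\delta$ produces, besides the dissipative $-2\|\pt v\|^2$, the good negative term $-\delta\big[\|v\|_2^2-(p-\|u\|_1^2)\|v\|_1^2+\alpha\|v\|^2\big]+\delta\|\pt v\|^2$ from $\delta\l v,\ptt v\r+\delta\l v,\pt v\r$ after substituting the equation. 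Using \eqref{gamma} this yields $\ddt\Phi_\delta+2\omega\Phi_\delta\le \ddt\big[\|u\|_1^2\big]\,\|v\|_1^2$ for some $\omega=\omega(R)>0$, modulo absorbing the $\delta\|\pt v\|^2$ into the $-2\|\pt v\|^2$.

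The genuinely delicate point is the forcing term $\ddt\big[\|u\|_1^2\big]\,\|v\|_1^2$: the function $t\mapsto\|u(t)\|_1^2$ need not be monotone and its derivative $2\l A^{1/2}u,\pt u\r$ is only in $L^1_{\rm loc}$, not small. I would handle it by writing $\|v\|_1^2\le C(R)\Phi_\delta$ and invoking a Gronwall argument in the form: since $E(t)$ is bounded, the energy identity \eqref{E} gives $\int_0^\infty\|\pt u(s)\|^2\,ds\le Q(R)<\infty$, hence $\big|\ddt\|u\|_1^2\big|\le 2\|u\|_2\|\pt u\|_2\le C(R)\|\pt u\|$ is in $L^1(0,\infty)$ with a bound depending only on $R$ — more precisely $\int_0^\infty\big|\ddt\|u\|_1^2\big|\,ds\le \sqrt{\int_0^\infty\|\pt u\|^2}\cdot(\text{something bounded})$; here I need $\|\pt u\|_2$, so alternatively I keep $\|v\|_1^2$ and use that $\l A^{1/2}u,\pt u\r$ paired against $\|v\|_1^2$ can be estimated by $\|v\|_2^2$ times $\|\pt u\|$, which is $L^2$ in time. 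Then the differential inequality $\ddt\Phi_\delta+2\omega\Phi_\delta\le g(t)\Phi_\delta$ with $g\in L^1(0,\infty)$ (or $L^2$, absorbed into the $2\omega$ for large $t$) integrates to $\Phi_\delta(t)\le\Phi_\delta(0)\exp\big(\int_0^\infty g\big)e^{-2\omega t}$, and since $\Phi_\delta(0)\le C(R)\|z\|_\H^2\le C(R)R^2$, the equivalence of $\Phi_\delta$ with $\|L(t)z\|_\H^2$ gives $\|L(t)z\|_\H\le Ce^{-\omega t}$ with $C,\omega$ depending on $R$, as claimed. I expect the bookkeeping of which norm of $\pt u$ is available (only $\|\pt u\|$ from \eqref{E}, not $\|\pt u\|_2$) to be the step requiring the most care, and it is resolved by pairing the problematic term against the coercive $\|v\|_2^2$ rather than against $\|v\|_1^2$.
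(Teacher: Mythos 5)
Your proposal is correct and follows essentially the same route as the paper: the same weighted energy functional (the paper's $\Phi_0$ simply carries the cross term $\l \pt v,v\r$ with coefficient $1$ instead of a small $\delta$), the same differential inequality with forcing $2\l\pt u,A^{1/2}u\r\|v\|_1^2\leq Q(R)\|\pt u\|\,\Phi_0$, and the same resolution of the delicate point via the square-integrability in time of $\|\pt u\|$, which is exactly the content of Lemma~\ref{lemmaINT} combined with the Gronwall-type Lemma 6.2 of \cite{GPV}. The only caveat is that your parenthetical claim $g\in L^1(0,\infty)$ is not available ($\|\pt u\|$ is merely $L^2$ in time), but your stated fallback of bounding $\int_\tau^t g\leq\eps(t-\tau)+C_\eps$ and absorbing the linear growth into the exponential rate is precisely what the paper does.
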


\begin{proof}
After denoting
$$\E_0(t)=\E_0(L(t)z)=\|L(t)z\|_{\H}^2= \|v(t)\|^2_2+\|\pt v(t)\|^2,$$
we set $\Phi_0(t)=\Phi_0(L(t)z,u(t))$, where $u(t)$ is the first component of $S(t)z$ and
$$
\Phi_0(L(t)z,u(t))=\E_0(L(t)z)- p \|v(t)\|_1^2+ (\alpha + \frac 12) \|v(t)\|^2
+\|u(t)\|^2_1\|v(t)\|^2_1+\l\pt v(t),v(t)\r.
$$
In light of Lemma \ref{energia-limitata} and inequalities (\ref{gamma}),
we have the bounds
\begin{equation}
\label{stima-3} \frac12\E_0\leq \Phi_0\leq Q(R)\E_0.
\end{equation}
Now, we compute the time-derivative of $\Phi_0$ along the solutions to system (\ref{DECAY}) and we obtain
$$
\frac{d}{dt}\Phi_0+\Phi_0=2\l\pt
u,A^{1/2}u\r\|v\|^2_1\leq Q(R)\|\pt u\|\Phi_0.
$$
The exponential decay of $\Phi_0$ is entailed by exploiting the following Lemma \ref{lemmaINT} and then applying Lemma 6.2 of \cite{GPV}. From \eqref{stima-3} the desired decay of $\E_0$ follows.
\end{proof}

\begin{lemma}
\label{lemmaINT}
For any $\eps > 0$
$$
\int\limits_{\tau }^{t}\left\|
\partial_t u\left( s\right) \right\| ds\leq \eps (t-\tau
)+\frac \eps 4+ \frac{Q(R)}{\eps }\,,
$$
for every $t\geq\tau\geq0.$
\end{lemma}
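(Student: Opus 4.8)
The plan is to estimate $\int_\tau^t \|\pt u(s)\|\,ds$ by passing through the $L^2$‑in‑time bound on $\pt u$, which is exactly what the energy identity \eqref{E} delivers. First I would recall that, by Lemma \ref{energia-limitata}, for $\|z\|_{\H}\le R$ the energy $\E(t)$, and hence $\L(t)$ and the full energy $E(t)$, stay bounded by $Q(R)$ for all $t\ge 0$. Integrating the energy identity \eqref{E} (or, more conveniently, integrating \eqref{LiapDis}, i.e. $\ddt\L(S(t)z)=-2\|\pt u(t)\|^2$) over $[\tau,t]$ gives
$$
2\int_\tau^t \|\pt u(s)\|^2\,ds=\L(\tau)-\L(t)\le \L(\tau)\le Q(R),
$$
using that $\L$ is nonincreasing and bounded below up to an additive constant absorbed into $Q(R)$. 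Thus $\int_\tau^t\|\pt u(s)\|^2\,ds\le Q(R)$, uniformly in $t\ge\tau\ge 0$.

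Next I would convert this $L^2$ control into the claimed $L^1$ bound with the free small parameter $\eps$. By the Cauchy–Schwarz (Hölder) inequality on the interval $[\tau,t]$,
$$
\int_\tau^t\|\pt u(s)\|\,ds\le (t-\tau)^{1/2}\Big(\int_\tau^t\|\pt u(s)\|^2\,ds\Big)^{1/2}\le (t-\tau)^{1/2}\,Q(R)^{1/2}.
$$
Then Young's inequality in the form $ab\le \eps a^2+\tfrac{1}{4\eps}b^2$ with $a=(t-\tau)^{1/2}$ and $b=Q(R)^{1/2}$ yields
$$
\int_\tau^t\|\pt u(s)\|\,ds\le \eps (t-\tau)+\frac{1}{4\eps}\,Q(R)=\eps(t-\tau)+\frac{Q(R)}{\eps},
$$
which already gives the assertion up to the harmless extra $\eps/4$ term in the statement (one may simply add it, or obtain it by splitting $(t-\tau)^{1/2}\le 1+(t-\tau)$ when $t-\tau\le 1$ versus $t-\tau\ge 1$ and applying Young only on the large part; either route lands inside $\eps(t-\tau)+\eps/4+Q(R)/\eps$ after renaming constants, since $Q$ is a generic increasing function).

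There is no real obstacle here: the only thing to be careful about is the lower bound on $\L$ used to discard $-\L(t)$, which follows from the estimate $\L(z)\ge \tfrac12 E(z)-c\ge -c$ already recorded in the proof of Proposition \ref{LYAP} and Lemma \ref{energia-limitata}, so the constant $c$ gets swallowed by $Q(R)$. The role of this lemma in the sequel is precisely to feed the Gronwall‑type argument behind Lemma \ref{lemmaDECAY} (and the companion smoothing estimate for $K(t)$): a factor $\|\pt u\|$ multiplying an exponentially decaying quantity is integrable against a decaying exponential exactly because of a bound of this shape, with $\eps$ chosen small relative to the decay rate.
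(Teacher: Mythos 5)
Your argument is correct and is essentially the paper's own proof: both integrate $\ddt\L(S(t)z)=-2\|\pt u(t)\|^2$ over $[\tau,t]$, use the monotonicity and lower bound of $\L$ (via Lemma \ref{energia-limitata}) to get $\int_\tau^t\|\pt u(s)\|^2\,ds\le Q(R)$, and then apply H\"older plus a Young-type inequality to trade the square root of $t-\tau$ for $\eps(t-\tau)+\eps/4+Q(R)/\eps$. The only cosmetic difference is that the paper passes through the intermediate bound $\eps\sqrt{t-\tau}+Q(R)/\eps$ and then uses $\sqrt{x}\le x+\tfrac14$, whereas you apply Young's inequality in one step; both land on the stated estimate.
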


\begin{proof}
After integrating \eqref{LiapDis} over $(\tau, t)$ and taking \eqref{energy} into account, we obtain
$$
\frac 12 \E(S(t)z) -2 \frac{\|f\|^2}{\lambda_1} \leq \L(S(t)z) + 2 \int\limits_{\tau }^{t}\left\|
\partial_t u\left( s\right) \right\|^2 ds = \L(S(\tau)z) \leq \L(z).
$$
It follows
$$ \int\limits_{\tau }^{t}\left\|
\partial_t u\left( s\right) \right\|^2 ds \leq Q(R),$$
which, tanks to the H\"{o}lder inequality, yields
$$
\int\limits_{\tau }^{t}\left\| \partial_t u\left( s\right)
\right\| ds\leq
\eps\sqrt{t-\tau } + \frac{Q(R)}{\eps} \leq
\eps (t-\tau
)+\frac \eps 4+ \frac{Q(R)}{\eps },
$$
for any  $\eps>0$.
\end{proof}

The next result provides the boundedness of $K(t)z$ in a more
regular space.

\begin{lemma}{\rm{(see \cite{GPV}, Lemma 6.3)}}
\label{lemmaCPT} The estimate
$$\|K(t)z\|_{\H^2}\leq Q(R)$$
holds for every $t\geq 0$.
\end{lemma}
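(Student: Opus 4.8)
The plan is to prove the bound by a higher-order energy estimate for the $K$-component, run one level above the natural energy: working in $\H^2=H^4\times H^2$ and multiplying the first equation in \eqref{CPT} by $A\partial_t w$ instead of $\partial_t w$. Set
$$\E_1(t)=\|K(t)z\|_{\H^2}^2=\|w(t)\|_4^2+\|\partial_t w(t)\|_2^2.$$
This produces an identity for $\ddt\E_1+2\|\partial_t w\|_2^2$ whose right-hand side contains four terms: the geometric term $2(p-\|u\|_1^2)\langle A^{1/2}w,A\partial_t w\rangle$, the coupling term $2\alpha\langle v,A\partial_t w\rangle$, the cable term $-2k^2\langle u^+,A\partial_t w\rangle$, and the source $2\langle f,A\partial_t w\rangle$. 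Throughout I would use freely the cheap bound $\|K(t)z\|_{\H}\le\|S(t)z\|_{\H}+\|L(t)z\|_{\H}\le Q(R)$, which follows at once from Lemma~\ref{energia-limitata} and Lemma~\ref{lemmaDECAY} since $K=S-L$.

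The first and last terms are absorbed into a Lyapunov-type functional, exactly as $\Phi_0$ was built in the proof of Lemma~\ref{lemmaDECAY}. The source is the easiest: $f$ being time-independent, $2\langle f,A\partial_t w\rangle=2\ddt\langle f,Aw\rangle$, and since $Aw\in H$ one has $2|\langle f,Aw\rangle|\le\frac14\|w\|_4^2+C\|f\|^2$, so this is a legitimate addition to the functional comparable to $\E_1$. The geometric term is rewritten as $(p-\|u\|_1^2)\ddt\|w\|_3^2$ and moved into the functional up to the lower-order remainder $2\langle A^{1/2}u,\partial_t u\rangle\|w\|_3^2$; here the possible failure of coercivity for $p>p_c$ is handled by the interpolation inequality $\|w\|_3^2\le\|w\|_2\|w\|_4$ together with $\|w(t)\|_2^2\le\|K(t)z\|_{\H}^2\le Q(R)$, which turns $|p-\|u\|_1^2|\,\|w\|_3^2$ into $\frac14\|w\|_4^2+Q(R)$, while the remainder is bounded by $Q(R)\|\partial_t u\|\,\E_1$ using $\|u(t)\|_2\le Q(R)$ and \eqref{POINCARE}. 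The coupling term is harmless because $L(t)z$ decays: $2\alpha|\langle v,A\partial_t w\rangle|\le\frac12\|\partial_t w\|_2^2+C\|v\|_2^2$ and $\|v(t)\|_2^2\le Ce^{-2\omega t}$ by Lemma~\ref{lemmaDECAY}, an exponentially small, hence uniformly summable, source.

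The genuinely new point — and the one I expect to be the main obstacle — is the cable term, because $u^+$ is only $H^1$-regular (its spatial derivative jumps across the zero set of $u$), so the pairing $\langle A^{1/2}u^+,A^{1/2}\partial_t w\rangle$ is unavailable. The way around it is the same duality observation used for $f$: $Aw$ lies in $H$, and $u^+$ lies in $H$ with $\|u^+(t)\|\le\|u(t)\|\le Q(R)$. So I would use the product rule to write
$$-2k^2\langle u^+,A\partial_t w\rangle=-2k^2\ddt\langle u^+,Aw\rangle+2k^2\langle\partial_t u^+,Aw\rangle,$$
move the exact-derivative piece into the functional (it is controlled by $\frac14\|w\|_4^2+Q(R)$, hence comparable to $\E_1$), and estimate the remaining piece through $\partial_t u^+=\chi_{\{u>0\}}\partial_t u$ a.e., which gives $\|\partial_t u^+\|\le\|\partial_t u\|$ and hence $2k^2\langle\partial_t u^+,Aw\rangle\le\delta\|w\|_4^2+C_\delta\|\partial_t u\|^2$.

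Assembling the pieces — the energy $\E_1$, the three moved terms, and a small multiple $\eps\langle\partial_t w,Aw\rangle$ of the usual cross term, whose derivative supplies the $-\eps\|w\|_4^2$ dissipation missing from the bare identity — yields a functional $\Lambda_1(t)$ with $\frac12\E_1-Q(R)\le\Lambda_1\le 2\E_1+Q(R)$, with $\Lambda_1(0)=Q(R)$ because of the null initial data in \eqref{CPT}, and obeying a differential inequality of the form
$$\ddt\Lambda_1+\omega\Lambda_1\le Q(R)\|\partial_t u\|\,\Lambda_1+\varphi(t),$$
where $\varphi(t)=C\|\partial_t u(t)\|^2+Ce^{-2\omega t}+Q(R)$. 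Here $\sup_{t\ge0}\int_t^{t+1}\varphi\le Q(R)$, the only nontrivial point being $\int_t^{t+1}\|\partial_t u\|^2\le Q(R)$, which follows by integrating the energy identity \eqref{E} and invoking Lemma~\ref{energia-limitata}. Since moreover $\int_s^t\|\partial_t u\|\le\eps(t-s)+\frac\eps 4+Q(R)/\eps$ by Lemma~\ref{lemmaINT}, choosing $\eps$ small to absorb $Q(R)\|\partial_t u\|$ into $\omega$ and then running the Gronwall argument used in the proof of Lemma~\ref{lemmaDECAY} (equivalently, \cite[Lemma~6.2]{GPV}, or Lemma~\ref{superl} after that absorption) gives $\Lambda_1(t)\le Q(R)$ for all $t\ge0$, whence $\|K(t)z\|_{\H^2}^2=\E_1(t)\le Q(R)$.
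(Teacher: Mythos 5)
Your proposal is correct and follows essentially the same route as the paper: the same perturbed higher-order functional $\Phi_1=\E_1+(\|u\|_1^2-p)\|w\|_3^2+\l\pt w,Aw\r-2\l f,Aw\r+2k^2\l u^+,Aw\r$, the same use of the interpolation $\|w\|_3^2\le\|w\|_2\|w\|_4$ with $\|w\|_2\le Q(R)$ to tame the non-coercive geometric term, and the same time-differentiation of $\l u^+,Aw\r$ to avoid pairing $A^{1/2}u^+$ with anything. The only (harmless) deviation is that you keep a $Q(R)\|\pt u\|\Lambda_1$ term and invoke Lemma~\ref{lemmaINT}, whereas the paper observes that the geometric remainder is only of order $Q(R)\sqrt{\E_1}$ by interpolation, so a plain linear Gronwall inequality $\frac{d}{dt}\Phi_1+\frac14\Phi_1\le Q(R)$ already suffices.
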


\begin{proof}
As well as in \cite{GPV}, we use here the interpolation inequality
$$
\|w\|_3^2\leq\|w\|_2\|w\|_4.
$$
Jointly with $\|w\|_2\leq Q(R)$ (which follows by comparison
from (\ref{energy}) and Lemma~\ref{lemmaDECAY}), this entails
\begin{equation} \label{interp}
p\|w\|_3^2\leq \frac12\E_1+Q(R),
\end{equation}
where
$$\E_1(t)=\E_1(K(t)z)=\|K(t)z\|_{\H^2}^2=\|w(t)\|^2_4+\|\pt w(t)\|_2^2.$$
Letting
$$\Phi_1=\E_1
+(\|u\|^2_1-p)\|w\|^2_3+\l\pt w,Aw\r-2\l f,A
w\r + 2 k^2 \l u^+,A w\r ,$$
we have the bounds
\begin{equation}
\label{Phi1_bounds}
\frac13\E_1-Q(R)\leq \Phi_1\leq Q(R)\E_1+Q(R).
\end{equation}
Taking the time-derivative of $\Phi_1$, we find
\begin{align*}
&\frac{d}{dt}\Phi_1+\Phi_1=
2\l\pt u,A^{1/2}u\r\|w\|^2_3+ 2\alpha \l A^{1/2}v,A^{1/2}\pt w\r +\\
&\quad+\Big[\alpha \l A^{1/2}v,A^{1/2}w\r-\l f,Aw\r\Big] - k^{2} \left\langle Aw,u^{+}\right\rangle
+2k^{2}\left\langle Aw,\partial _{t}u^{+}\right\rangle \ .
\end{align*}
Using (\ref{energy}) and (\ref{interp}), we control the rhs by
$$
\frac18\E_1+Q(R)\sqrt{\E_1}\,+Q(R)\leq\frac14\E_1 +Q(R) \leq
\frac{3}{4}\Phi_1 +Q(R) ,
$$
and we obtain
$$
\frac{d}{dt}\Phi_1+\frac{1}{4}\Phi_1 \leq Q(R).
$$
Since $\Phi_1(0)=0$, the standard Gronwall lemma yields the boundedness of $\Phi_1$. Then, by virtue of \eqref{Phi1_bounds}, we obtain the desired estimate for $\E_1$.
\end{proof}

By collecting previous results, Lemma 4.3 in \cite{CP} can
be applied to obtain the existence of the attractor $\A$ and its regularity.
Within our hypotheses and by virtue of the decomposition \eqref{DECAY}-\eqref{CPT}, it is also possible to prove the existence of regular exponential attractors for $S(t)$ with finite fractal dimension in $\H$. This can be done by a procedure very close to that followed in \cite{GPV}.
Since the global attractor is the {\it minimal} closed attracting
set, we can conclude that the fractal dimension of $\A$ in $\H$ is
finite as well.

\medskip
\noindent
{\ti Acknowledgments}

The authors are indebted to the anonymous referees for their valuable remarks and comments.



\end{document}